\DeclareSymbolFont{rsfscript}{OMS}{rsfs}{m}{n}
\DeclareSymbolFontAlphabet{\mathrsfs}{rsfscript}
\DeclareMathOperator{\Mod}{ mod}
\DeclareMathOperator{\dt}{.}
\DeclareMathOperator{\rt}{rt}
\begin{document}
\title{Reset thresholds of automata with\\ two cycle lengths}
\author{Vladimir V. Gusev, Elena V. Pribavkina}

\institute{Institute of Mathematics and Computer Science,\\ Ural Federal University, Ekaterinburg, Russia\\
\email{vl.gusev@gmail.com, elena.pribavkina@gmail.com}}

\maketitle

\begin{abstract}
We present several series of synchronizing automata with multiple parameters,
generalizing previously known results.
Let $p$ and $q$ be two arbitrary co-prime positive integers, $q > p$.
We describe reset thresholds of the colorings
of primitive digraphs with exactly one cycle of length $p$ and one cycle of length $q$.
Also, we study reset thresholds of the colorings
of primitive digraphs with exactly one cycle of length $q$ and two cycles of length $p$.
%
\end{abstract}

\section{Introduction}
A \emph{complete deterministic} \emph{finite automaton} $\mathrsfs{A}$, or simply \emph{automaton}, is
a triple $\langle Q,\Sigma,\delta\rangle$, where $Q$ is a finite \emph{set of states},
$\Sigma$ is a finite \emph{input alphabet}, and $\delta: Q\times \Sigma\mapsto Q$ is a totally 
defined \emph{transition function}. Following standard notation, by $\Sigma^*$ we mean the set 
of all finite words over the alphabet $\Sigma$, including the empty word $\varepsilon$. 
The function $\delta$ naturally
extends to the free monoid $\Sigma^{*}$; this extension is still
denoted by $\delta$. Thus, via $\delta$, every word
$w\in\Sigma^*$ acts on the set $Q$. For each $v\in\Sigma^*$ and each $q\in Q$ we write
$q\dt v$ instead of $\delta(q,v)$ and let $Q\dt v=\{q\dt v\mid q\in
Q\}$. 

An automaton $\mathrsfs{A}$ is called
\emph{synchronizing}, if there is a word $w\in\Sigma^*$ 
which brings all states of the automaton $\mathrsfs{A}$ to a particular one,
 i.e. $|Q\dt w|=1.$ 
Any such word $w$ is said to be a \emph{reset} (or \emph{synchronizing}) \emph{word}
for the automaton $\mathrsfs{A}$. The minimum length of reset words for $\mathrsfs{A}$
is called the \emph{reset threshold} of $\mathrsfs{A}$. 

Synchronizing automata serve as transparent and natural models of error-resistant
systems in many applied areas (robotics, coding theory).
At the same time, synchronizing automata surprisingly arise in some parts of pure mathematics
(algebra, symbolic dynamics, combinatorics on words). See recent surveys
by Sandberg \cite{Sa05} and Volkov \cite{Vo_Survey}
for more details on the theory and applications of synchronizing automata. 

One of the most important and natural questions related to synchronizing automata is the following:
given $n$, how big can the reset threshold of an automaton with $n$ states be?
In 1964 \v{C}ern\'{y} exhibited a series of automata with $n$ states whose reset threshold equals 
$(n-1)^2$ \cite{Ce64}. Soon after he conjectured, that this series represents the worst possible case,
i.e. the reset threshold of every $n$-state synchronizing automaton is at most $(n-1)^2$.
This hypothesis has become known as
the \emph{\v{C}ern\'{y} conjecture}. In spite of its simple
formulation and many researchers' efforts, the \v{C}ern\'{y}
conjecture remains unresolved for about fifty years. 
Moreover, no upper bound of magnitude $O(n^2)$ for the 
reset threshold
of a synchronizing $n$-state automaton is known so far. The best known 
upper bound on the reset threshold of a synchronizing
$n$-state automaton is the bound $\frac{n^3-n}6$ found by
Pin~\cite{Pi83} in 1983. 

In an attempt to understand why the \v{C}ern\'{y} conjecture is so difficult to
resolve, researchers started to look for \emph{slowly synchronizing automata},
i.e. automata with $n$ states and reset threshold close to $(n - 1)^2$.
First series of such automata were presented in~\cite{AVZ}.
The number of known series of slowly synchronizing automata was significantly increased
in~\cite{AGV13}. In the latter paper the constructions are based on
the observed connection between slowly synchronizing
automata and primitive digraphs with large exponent.

A digraph $D$ is said to be \emph{primitive}, if there is a positive integer $t$ such that for every pair
of vertices $u$ and $v$ there is a path form $u$ to $v$ of length $t$. The smallest $t$ with this property 
is called the \emph{exponent} of the digraph $D$. Equivalently, if $M$ is the adjacency matrix of $D$,
then $t$ is the smallest number such that $M^t$ is positive. For additional results on the well-established field
of primitive digraphs we refer a reader to~\cite{BrRy91}.

The \emph{underlying digraph} $\mathcal{D}(\mathrsfs{A})$ of an automaton $\mathrsfs{A}$ has $Q$
as the set of vertices, and $(u,v)$ is an edge if $u \dt x = v$ for some letter $x \in \Sigma$.
A \emph{coloring} of a digraph $D$ is an automaton $\mathrsfs{A}$ such that $\mathcal{D}(\mathrsfs{A})$
is isomorphic to $D$.
Proposition~2~\cite{AGV13} states, that
the reset threshold of an arbitrary
$n$-state strongly connected synchronizing automaton is greater 
than the exponent of the underlying digraph minus $n$.
At the same time, the Road Coloring theorem~\cite{Tr09} states that any primitive digraph
has at least one synchronizing coloring.
Thus, $n$-state slowly synchronizing automata can be constructed from the well-known examples~\cite{DM64} of 
primitive digraphs on $n$ vertices with exponents close $(n-1)^2$. This idea was presented and
explored in~\cite{AGV13}. In the present paper we generalize several series of 
slowly synchronizing automata presented in~\cite{AGV13}.
Namely, $\mathrsfs{W}_n$, $\mathrsfs{D}'_n$ and $\mathrsfs{D}''_n$.


Another motivation for the present paper comes from the following facts.
Computational experiments of Trahtman~\cite{Tr06a} revealed that
not every positive integer in $\{1, \ldots, (n-1)^2\}$ may serve as the reset threshold of some automaton
with $n$ states over a binary alphabet. For example, there is no automaton with nine states over 
a binary alphabet with the reset threshold in the range from 59 to 63. Similar gaps were found for
automata with the number of states ranging from 6 to 10. These results were confirmed in~\cite{AGV13}.
Moreover, a second gap was presented, i.e. there are no 9-state automata over a binary alphabet with
the reset threshold from 53 to 55.
For 10-state automata a third gap, along with the first two, was found in the course of computational experiments of
Kisielewicz and Szyku{\l}a~\cite{KS13}.
This brings up the following
natural question: given $n$, which positive integers are
reset thresholds of $n$-state automata?
Surprisingly, the set $E_n$ of all possible exponents of primitive digraphs on a fixed
number $n$ of vertices has similar gaps~\cite{DM64} as the set $R_n$ of all possible 
reset thresholds of $n$-state automata. Furthermore, for every $n$ the set $E_n$ is 
fully described~\cite[p. 83]{BrRy91}. We hope that study of this similarity
could shed light on properties of $R_n$. The following
statement~\cite{LV81} plays the key role in the description of $E_n$: if the exponent of a primitive digraph $D$
is at least $\frac{(n - 1)^2 + 1}{2} + 2$, then $D$ has cycles of exactly two different lengths.
This motivates our choice in the present paper to focus on automata whose underlying digraphs have exactly 
two different cycle lengths. 

Let $p$ and $q$ be two arbitrary co-prime positive integers, $q > p$.
In section~\ref{sec:wielandt} we describe reset thresholds of the colorings
of primitive digraphs with exactly one cycle of length $p$ and one cycle of length $q$.
In section~\ref{sec:second} we study reset thresholds of the colorings
of primitive digraphs with exactly one cycle of length $q$ and two cycles of length $p$.

%
%

\section{Wielandt-type automata}
\label{sec:wielandt}
We start with recalling the following elementary and well-known number-theoretic result.
\begin{theorem}[{\mdseries\cite[Theorem 2.1.1]{RaAl05}}]
\label{coin_problem}
Given two positive co-prime integers $p$ and $q$, the largest integer that is not expressible as a non-negative integer combination
of $p$ and $q$, is $(p-1)(q-1)-1$.
\end{theorem}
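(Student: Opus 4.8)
The plan is to establish the two halves of the statement separately. Writing $N := (p-1)(q-1) - 1 = pq - p - q$, I would first show that $N$ admits no representation $N = ap + bq$ with $a, b \geq 0$, and then show that every integer exceeding $N$ does admit such a representation. The single tool used throughout is that, since $\gcd(p,q) = 1$, the multiples $0, q, 2q, \ldots, (p-1)q$ form a complete residue system modulo $p$, i.e.\ they run through all residue classes exactly once.

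For the non-representability of $N$, I would argue by contradiction. Assuming $pq - p - q = ap + bq$ with $a, b \geq 0$, I rewrite it as $pq = (a+1)p + (b+1)q$ and reduce modulo $p$, obtaining $(b+1)q \equiv 0 \pmod{p}$. Coprimality then forces $p \mid b+1$, hence $b + 1 \geq p$, and symmetrically $a + 1 \geq q$. Feeding the resulting lower bounds $a \geq q-1$ and $b \geq p-1$ back into the equation gives $ap + bq \geq (q-1)p + (p-1)q = 2pq - p - q$, which strictly exceeds $pq - p - q$, a contradiction.

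For the representability, I fix an integer $n \geq pq - p - q + 1$. Using the complete-residue-system fact, I choose the unique $b$ with $0 \leq b \leq p-1$ satisfying $bq \equiv n \pmod{p}$, and set $a := (n - bq)/p$, which is an integer by construction. The only thing left is to confirm $a \geq 0$: since $bq \leq (p-1)q$ and $n \geq pq - p - q + 1$, I get $ap = n - bq \geq (pq - p - q + 1) - (p-1)q = 1 - p > -p$, whence $a > -1$ and therefore $a \geq 0$, giving the desired representation $n = ap + bq$.

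I expect the representability direction to carry whatever difficulty the argument has: one must select the correct residue representative for the coefficient $b$ and then verify that the complementary coefficient $a$ cannot dip below zero. This sign bookkeeping is precisely where the threshold $(p-1)(q-1)$ is exploited, since a starting bound on $n$ even one unit weaker would make the estimate on $ap$ fail to exclude $a = -1$.
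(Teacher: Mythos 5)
Your proof is correct, but there is nothing in the paper to compare it against: the paper does not prove this statement at all, it simply cites it as Theorem~2.1.1 of Ram\'{\i}rez Alfons\'{\i}n's book on the Frobenius problem and uses it as a black box (in Lemma~\ref{rt} and Theorem~\ref{B1_rt}). Your self-contained argument is the standard proof of Sylvester's theorem, and both halves check out. For non-representability: from $pq-p-q=ap+bq$ you correctly pass to $pq=(a+1)p+(b+1)q$, and reduction modulo $p$ and modulo $q$ together with coprimality gives $p\mid b+1$ and $q\mid a+1$, whence $ap+bq\ge (q-1)p+(p-1)q=2pq-p-q>pq-p-q$, a contradiction. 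For representability of any $n\ge (p-1)(q-1)$: choosing the unique $b\in\{0,1,\dots,p-1\}$ with $bq\equiv n\pmod p$ (legitimate, since $0,q,\dots,(p-1)q$ is a complete residue system modulo $p$) and setting $a=(n-bq)/p$, the estimate $ap=n-bq\ge (pq-p-q+1)-(p-1)q=1-p$ indeed forces the integer $a$ to satisfy $a>-1$, i.e.\ $a\ge 0$. Your closing remark is also accurate: the bound is exploited exactly at this last step, and weakening it by one unit (i.e.\ taking $n=pq-p-q$) is precisely what the first half shows to be fatal, so the two halves fit together tightly.
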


Let us fix two positive co-prime integers $p$ and $q$. Without loss of generality, we assume $p<q$. 
Let $n$ be a positive integer, $n<p+q$. We define a Wielandt-type automaton $\mathrsfs{W}(n,q,p)$ 
as follows (see Fig.~\ref{fig_A1m}). 
The state set $Q=\{0,1,\ldots, n-1\}$, $\Sigma=\{a,b\}$,
and the transitions are defined in the following way:\\ 
$0\dt a=q$ if $n>q$, and $0\dt a=q-p+1$ if $n=q$;\quad $0\dt b=1$; \\ 
$i\dt x=i+1$ for $1\le i< n-1$ and $i\ne q-1$ for each $x\in\Sigma$;\\ 
$(q-1)\dt x =0$ for each $x\in\Sigma$;\\
if $n>q$, then $(n-1)\dt x=n-p+1$ for each $x\in\Sigma$.

\smallskip
\noindent In case $q=n$, $p=n-1$ we obtain Wielandt automaton $\mathrsfs{W}_n$ considered in \cite{AGV13}. 
It is not hard to observe, that every strongly connected $n$-state automaton whose underlying digraph has exactly one cycle of length $p$
and exactly one cycle of length $q$ is isomorphic to $\mathrsfs{W}(n,q,p)$.

First let us consider the case $n=q$ (see Fig.~\ref{fig_A1}).
\begin{figure}[ht]
 \begin{center}
  \unitlength=4pt
    \begin{picture}(30,30)(0,0)
    \gasset{Nw=4,Nh=4,Nmr=2}
    \thinlines
    \node[Nw=8,Nh=8,Nmr=4,Nframe=n](A6)(15,0){$\ldots$}
\node(A0)(1,20){$0$}
\node(A1)(7.5,28){$1$}
\node(A2)(22.5,28){$2$}
\node[Nw=8,Nh=8,Nmr=4,Nframe=n](A3)(29,20){$\vdots$}
\node[Nw=6](A4)(4.4,4.4){$q$-$1$}
\node[Nw=7](A5)(28,7.5){$q$-$p$+$1$}
    \drawedge[curvedepth=1](A0,A1){$b$}
    \drawedge[curvedepth=2](A1,A2){$a,b$}
    \drawedge[curvedepth=1](A2,A3){$a,b$}
    \drawedge[curvedepth=2](A4,A0){$a,b$}
    \drawedge[curvedepth=1](A3,A5){}
    \drawedge[curvedepth=1.5](A5,A6){$a,b$}
    \drawedge[curvedepth=1](A6,A4){}
    \drawedge[curvedepth=2](A0,A5){$a$}
    \end{picture}
 \end{center}
  \caption{The Wielandt-type automaton $\mathrsfs{W}(q,q,p)$}
  \label{fig_A1}
\end{figure}
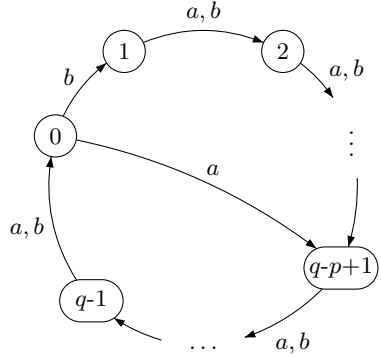
\begin{lemma}
\label{rt}
Let $\mathrsfs{A}$ be a strongly connected synchronizing automaton, whose cycles have lengths $p$ and $q$. If $\gcd(p,q)=1$, then $\rt(\mathrsfs{A})\ge (p-1)(q-1)$.
Moreover, if there are states $s$, $t$, and a positive integer $\ell$ such that:\\
(i) there is a shortest synchronizing word $w$ which resets the automaton $\mathrsfs{A}$ to $s$, \\
(ii) $t \dt u = s$ for each word $u$ of length $\ell$,\\
then $\rt(\mathrsfs{A})\ge (p-1)(q-1) + \ell$.
\end{lemma}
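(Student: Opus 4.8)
The plan rests on two ingredients. The first is a purely combinatorial observation: since every simple cycle of $\mathrsfs{A}$ has length $p$ or $q$, the edge multiset of any closed walk decomposes into simple cycles, so the length of every closed walk is a non-negative integer combination $\alpha p+\beta q$ and therefore lies in the numerical semigroup generated by $p$ and $q$. The second is Theorem~\ref{coin_problem}: the number $(p-1)(q-1)-1$ is \emph{not} such a combination, while every integer $\ge(p-1)(q-1)$ is. From these I extract the elementary fact I will use: a block of $p$ (or of $q$) consecutive integers all lying in the semigroup must avoid the gap $(p-1)(q-1)-1$, and since it realizes every residue class it cannot fit entirely below that gap either (realizing the residue of $-q$, resp.\ $-p$, already forces an entry exceeding the Frobenius number); hence its least element is at least $(p-1)(q-1)$.

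For the base inequality I would argue as follows. Let $w$ be a reset word with $Q\dt w=\{s\}$ and $|w|=L$. The key remark is that $w$ drives \emph{every} state to $s$; in particular $s\dt w=s$, so $s$ lies on a nonempty closed walk and hence on a simple cycle $C\colon s=z_0\to z_1\to\cdots\to z_{r-1}\to s$ of some length $r\in\{p,q\}$. For each $i$ I form the closed walk at $s$ that first runs along $C$ from $s$ to $z_i$ (length $i$) and then applies $w$ to return, since $z_i\dt w=s$ (length $L$). This yields closed walks of the $r$ consecutive lengths $L,L+1,\dots,L+r-1$, all lying in the semigroup. By the fact above their least element obeys $L\ge(p-1)(q-1)$, which is the first claim.

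For the refinement I would push this block of consecutive lengths downward by $\ell$. First I would record that minimality in~(i) implies a shortest reset word becomes constant only at its final letter, so every proper prefix still has image of size at least two. Then, invoking~(ii), from $t$ every word of length $\ell$ already arrives at $s$, which should let me splice the forced passage $t\rightsquigarrow s$ onto suitably truncated copies of the return trips $z_i\dt w$ so as to exhibit, for each $i$, a closed walk at $s$ of length $(L-\ell)+i$. That would give a block of $r$ consecutive semigroup elements with least element $L-\ell$, whence $L-\ell\ge(p-1)(q-1)$, i.e.\ $\rt(\mathrsfs{A})\ge(p-1)(q-1)+\ell$. The routine parts here are the cycle decomposition and the number-theoretic block fact; the main obstacle is precisely this last splice. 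One must show that hypotheses~(i) and~(ii) genuinely allow one to save exactly $\ell$ steps on the returns $z_i\to s$ while keeping a closed walk at $s$ — in other words, that the funnel into $s$ through $t$ contributes \emph{additively} to, rather than overlapping with, the merging work counted by the base bound. Making this construction rigorous, and checking it neither fails to close up at $s$ nor shortens the walks in an uncontrolled way, is where the argument has to be carried out with care.
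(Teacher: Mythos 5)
Your proof of the first inequality is correct, though it takes a slightly different route from the paper: you extract a simple cycle $C$ through $s$ of length $r\in\{p,q\}$ and build $r$ closed walks at $s$ of the consecutive lengths $|w|,\dots,|w|+r-1$, then invoke your ``block'' fact. The paper gets the same conclusion more directly: for \emph{every} word $u$ the word $uw$ is again synchronizing to $s$, so $s\dt uw=s$, hence every integer $\ge|w|$ is the length of a closed walk at $s$ and thus a non-negative combination of $p$ and $q$; since $(p-1)(q-1)-1$ is not such a combination (Theorem~\ref{coin_problem}), $|w|\ge(p-1)(q-1)$ follows at once. Both arguments are sound; the paper's dispenses with the cycle through $s$ and the block fact altogether.

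For the second inequality, however, there is a genuine gap, and you flag it yourself: the ``splice'' that is supposed to save exactly $\ell$ steps \emph{is} the entire content of the refined bound, and you leave it as an unproved obstacle. Worse, the construction you sketch points in the wrong direction: attaching a passage $t\rightsquigarrow s$ to truncated copies of the return trips $z_i\dt w$ cannot work as described, because truncating $w$ destroys the only property ($z_i\dt w=s$) that made those walks close up at $s$ --- precisely the failure mode you worry about. The missing idea is to follow the trajectory of $t$, not of $s$ or of $z_i$. Suppose $|w|<(p-1)(q-1)+\ell$ and pick any word $u$ with $|uw|=(p-1)(q-1)+\ell-1$; write $uw=v_1v_2$ with $|v_1|=\ell$. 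Since $uw$ synchronizes the automaton to $s$, we have $t\dt v_1v_2=s$; by hypothesis~(ii) already $t\dt v_1=s$. Hence $s\dt v_2=s$, i.e.\ $v_2$ labels a closed walk at $s$ of length exactly $(p-1)(q-1)-1$, contradicting Theorem~\ref{coin_problem}. No truncation of $w$ is involved: hypothesis~(ii) is applied to the length-$\ell$ prefix of a padded synchronizing word, and the contradiction lives in the suffix. (In your setup one could equally note that each $c_iw$, with $c_i$ the path along $C$ from $s$ to $z_i$, is itself synchronizing to $s$, so the same prefix/suffix split yields closed walks at $s$ of lengths $|w|+i-\ell$ and your block fact finishes; also, your observation about proper prefixes of shortest reset words is true but plays no role in any version of the argument.)
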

\begin{proof} Let $\mathrsfs{A}=\langle Q,\Sigma,\delta\rangle$.
We prove the first part of the lemma.
Consider a synchronizing word $w$ having shortest possible length. Let $s=Q\dt w$ be the state to which the automaton is synchronized. Note, that the word $uw$ is synchronizing
for every $u\in\Sigma^*$, and $Q\dt uw=s.$ In particular, we have $s\dt w=s\dt uw=s$. Thus the word $w$, as well as the word $uw$, for every word $u$, labels a path in the automaton $\mathrsfs{A}$ from the state $s$ to itself. Every such path can be decomposed into cycles of lengths $p$ and $q$. Hence 
the number $|w|$, as well as $|w|+k$, for each positive integer $k$, can be represented as a non-negative combination of the numbers $p$ and $q$. Thus, by theorem~\ref{coin_problem},
we have $\rt(\mathrsfs{A})\ge (p-1)(q-1).$

Assume now that in addition there exist a state $t$ and a positive integer $\ell$
such that $t \dt u = s$ for each word $u$ of length $\ell$.
Suppose, contrary to our claim, that $|w| < (p-1)(q-1) + \ell$. Let $u \in \Sigma^*$ be an arbitrary word such that
$|uw| = (p-1)(q-1) + \ell - 1$. As before, the word $uw$ synchronizes the automaton $\mathrsfs{A}$
to the state $s$. But after applying its prefix of length $\ell$ to the state $t$ we end up in
the state $s$. Hence there is a path of length $(p-1)(q-1) - 1$ from $s$ to itself. But this number can not
be represented as a non-negative combination of $p$ and $q$ by theorem~\ref{coin_problem}. A~contradiction.
\end{proof}
\begin{theorem}
\label{rt_A1} The reset threshold of the Wielandt-type automaton $\mathrsfs{W}(q,q,p)$ equals $(p-1)(q-1)+q-p.$
\end{theorem}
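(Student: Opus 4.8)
The plan is to prove the two matching inequalities $\rt(\mathrsfs{W}(q,q,p))\ge (p-1)(q-1)+(q-p)$ and $\rt(\mathrsfs{W}(q,q,p))\le (p-1)(q-1)+(q-p)$. The common starting point is a description of the letter actions: $b$ acts as the cyclic permutation $i\mapsto i+1$ modulo $q$, while $a$ agrees with $b$ everywhere except at the state $0$, where $0\dt a=q-p+1$ instead of $1$. Hence $b$ is a bijection and never decreases the size of a subset of $Q$, whereas a short case check (using $2\le p<q$) shows that under $a$ the only pair of distinct states with a common image is $\{0,\,q-p\}$, both being sent to $q-p+1$. Thus $a$ is the only letter that can merge states, and it merges exactly this pair.

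For the lower bound I would apply Lemma~\ref{rt} with $s=q-p+1$, $t=1$ and $\ell=q-p$. Since the final step of any reset word decreases the image from two states to one, and the only available merging step is $a$ applied to $\{0,q-p\}$, every reset word---in particular every shortest one---synchronizes $\mathrsfs{W}(q,q,p)$ to $s=q-p+1$; this verifies hypothesis~(i). For hypothesis~(ii), note that $0$ is the only branching state, and the path $1\to 2\to\cdots\to q-p+1$ issued from $t=1$ has length $q-p$ and never meets $0$; it is therefore independent of the letters read, so $t\dt u=q-p+1=s$ for every word $u$ with $|u|=q-p$. Lemma~\ref{rt} then yields $\rt(\mathrsfs{W}(q,q,p))\ge (p-1)(q-1)+(q-p)$.

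For the upper bound I would exhibit a reset word of length $(p-1)(q-1)+(q-p)$ in two phases. In the first phase one applies $a^{\,q-p}$: since $a$ never produces the state $1$ and repeatedly collapses $\{0,q-p\}$, an easy induction gives $Q\dt a^{k}=Q\setminus\{1,\ldots,k\}$ for $1\le k\le q-p$, so that $Q\dt a^{\,q-p}=\{0,q-p+1,\ldots,q-1\}$ is a block of $p$ consecutive states, reached at cost $q-p$. In the second phase one reduces this block of size $p$ to a single state by $p-1$ merges, the $k$-th merge being realized by a word of the form $b^{\,i_k}a\cdots a$ that first rotates the current image into a position containing the pair $\{0,q-p\}$ and then applies $a$ to collapse it.

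The main obstacle is the precise analysis and length count of this second phase. The subtlety is that the merges are not self-similar: after each merge the gaps between the surviving states widen, so the repositioning needed before the next collapse grows, and one must track the image set carefully throughout. Concretely, I expect the $k$-th merge to cost $q-p+2k-1$ letters, so that the second phase telescopes to $\sum_{k=1}^{p-1}(q-p+2k-1)=(p-1)(q-p)+(p-1)^2=(p-1)(q-1)$, exactly matching the budget. Adding the two phases produces a reset word of length $(q-p)+(p-1)(q-1)$, and together with the lower bound this gives $\rt(\mathrsfs{W}(q,q,p))=(p-1)(q-1)+(q-p)$. Throughout I assume $p\ge 2$; the degenerate case $p=1$, where the short cycle is a loop, is handled directly.
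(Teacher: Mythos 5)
Your lower bound is complete and is essentially the paper's own argument: the only pair of states merged by any letter is $\{0,q-p\}$, collapsing under $a$ onto $q-p+1$, which verifies hypothesis~(i) of Lemma~\ref{rt} with $s=q-p+1$, and the deterministic path of length $q-p$ from $t=1$ verifies hypothesis~(ii), giving $\rt(\mathrsfs{W}(q,q,p))\ge (p-1)(q-1)+q-p$. The gap is in the upper bound, and it is exactly the step you flag as ``the main obstacle'': the claim that the $k$-th merge costs $q-p+2k-1$ letters is never proved, and the scheme you describe for realizing it --- rotate the current image by $b$'s until it contains $\{0,q-p\}$, then collapse with a single $a$ --- fails in general. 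Rotation by $b$ preserves the cyclic gap structure of the image, so after the first collapse the image typically admits \emph{no} rotation containing $\{0,q-p\}$; gaps can only be altered by applying $a$ at a moment when $0$ lies in the image, and the cost of doing so depends on the rotational position in which the previous collapse left the image, a quantity your accounting does not control.

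Concretely, take $p=3$, $q=5$, so $\rt(\mathrsfs{W}(5,5,3))=10$ and the merging pair is $\{0,2\}$ (both states go to $3$ under $a$). After your first phase $a^2$ the image is $\{0,3,4\}$. The only rotation of this set containing $\{0,2\}$ is by $b^2$, giving $\{0,1,2\}$, and the collapse $a$ leaves $\{2,3\}$; so your first merge costs $3=q-p+1$ as predicted. But from $\{2,3\}$ a breadth-first search over pairs shows that the shortest word reaching $\{0,2\}$ has length $5$, so the second merge costs $6$, not $q-p+3=5$, and your reset word has length $2+3+6=11>10$. The budget is met only if the first collapse is performed by $baa$ rather than $b^2a$ (both of length~$3$): $\{0,3,4\}\dt b=\{0,1,4\}$, $\dt a=\{0,2,3\}$, $\dt a=\{3,4\}$, and from $\{3,4\}$ the second merge does cost $5$. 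The intermediate $a$, applied while $0$ is in the image, is what repositions the image favorably, and this is precisely what the paper's explicit word $a^{q-p}(ba^{q-1})^{p-2}ba^{q-p}$ and Lemma~\ref{Cycle_syn} formalize: the word $ba^{q-1}$ fixes $0$ and advances every other state of the $p$-cycle by $r=q\bmod p$ positions, so $p-2$ repetitions funnel all but one state into $0$, leaving $\{0,q-r\}$, which $ba^{q-p}$ then collapses to $q-p+1$. Without an argument of this kind (or a corrected, position-aware induction on merges), your upper bound is not established.
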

\begin{proof}
Any shortest reset word $w$ for this automaton resets it to the state $q-p+1$, since it
is the only state which is a common end of two different edges with the same label.
Note, that any word of length $q-p$ brings the state 1 to the state $q-p+1$.
Lemma~\ref{rt} implies that the reset threshold of $\mathrsfs{W}(q,q,p)$ is at least $(p-1)(q-1) + q - p$.
%
%

Let us check that the word $w=a^{q-p}(ba^{q-1})^{p-2}ba^{q-p}$ synchronizes $\mathrsfs{W}(q,q,p)$.
After applying the prefix $a^{q-p}$ we end up in the cycle $C$ of length $p$:
$$Q\dt a^{q-p}=\{0,q-p+1,q-p+2,\ldots,q-1\}.$$ Next, we show that that the word $(ba^{q-1})^{p-2}$
brings $C$ to a two-element set. We state this fact as a separate lemma:

\begin{lemma}
\label{Cycle_syn} Let $\mathrsfs{A}$ be an automaton with the state set $Q$ over the alphabet $\Sigma=\{a,b\}$. Let $q > p$ be two co-prime
positive integers, and let $r$ denote the remainder of the division of $q$ by $p$.
Let $C = \{0,1, \ldots, p - 1\}$ be
a subset of $Q$ such that $0\dt a = 1$, $0\dt ba^{q-1} = 0$, and $i \dt x \equiv i+1 \Mod p$ for $1\leq i \leq p - 1$
and for all $x \in \Sigma$. Then $C \dt (ba^{q-1})^{p-2} = \{0, p - r\}$.
\end{lemma}
\begin{proof}
First note, that $i \dt ba^{q-1} \equiv i+r \Mod p$ for each state $i \neq 0$.
Consider the equation $i + rx \equiv 0 \Mod p$. Since $r$ and $p$ are co-prime,
this equation has unique solution in $\{1, \ldots, p - 1\}$ for every $i \neq 0$.
Then $i \dt (ba^{q-1})^x = 0$.
If $x \neq p - 1$, then $i \dt (ba^{q-1})^{p - 2} = 0$.
The case $x = p - 1$ occurs only if $i = r$. In this case $r \dt (ba^{q-1})^{p - 2} = p - r$.
\end{proof}
Returning back to the proof of the theorem, we have $C\dt (ba^{q-1})^{p-2}=\{0, q-r\}$.
The word $ba^{q-p}$ brings the latter set to the singleton $q-p+1$.
\end{proof}

Let us consider now the general case of the Wielandt-type automaton $\mathrsfs{W}(n,q,p)$ (see Fig.~\ref{fig_A1m}).
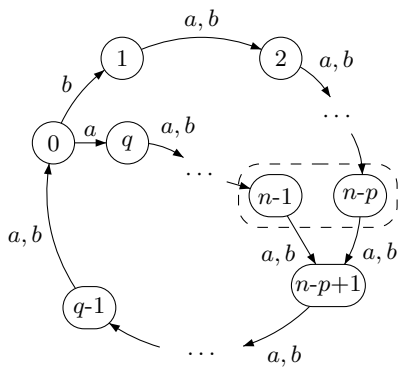
\begin{figure}[ht]
 \begin{center}
  \unitlength=4pt
    \begin{picture}(30,30)(0,0)
    \gasset{Nw=4,Nh=4,Nmr=2}
    \thinlines
    \node[Nw=8,Nh=8,Nmr=4,Nframe=n](A6)(15,0){$\ldots$}
\node(A0)(1,20){$0$}
\node(A1)(7.5,28){$1$}
\node(A2)(22.5,28){$2$}
\node[Nframe=n](A3)(28,22.5){$\cdots$}
\node[Nw=5](A4)(4.4,4.4){$q$-$1$}
\node[Nw=7](A5)(27,6.5){$n$-$p$+$1$}
\node[Nw=5,Nframe=n](B4)(15,17){$\cdots$}
\node(B1)(8,20){$q$}
\node[Nw=5](B2)(22,15){$n$-$1$}
\node[Nw=5](B3)(30,15){$n$-$p$}
\node[dash={1}0, Nw=15,Nh=6](C)(26,15){}
    \drawedge[curvedepth=1](A0,A1){$b$}
    \drawedge[curvedepth=2](A1,A2){$a,b$}
    \drawedge[curvedepth=1](A2,A3){$a,b$}
    \drawedge[curvedepth=2](A4,A0){$a,b$}
    \drawedge[curvedepth=1](A3,B3){}
    \drawedge[curvedepth=1.5](A5,A6){$a,b$}
    \drawedge[curvedepth=1](A6,A4){}
    \drawedge(A0,B1){$a$}
    \drawedge[curvedepth=1](B1,B4){$a,b$}
    \drawedge(B4,B2){}
    \drawedge[ELside=r](B2,A5){$a,b$}
    \drawedge[curvedepth=1](B3,A5){$a,b$}
    \end{picture}
 \end{center}
  \caption{The Wielandt-type automaton $\mathrsfs{W}(n,q,p)$}
  \label{fig_A1m}
\end{figure}
It is rather easy to see, that given a synchronizing automaton $\mathrsfs{B}$ and a congruence $\rho$,
the factor automaton $\mathrsfs{B}/\rho$ is also synchronizing, and $\rt(\mathrsfs{B}/\rho)\le \rt(\mathrsfs{B})$.
In particular, consider the following congruence $\sigma$ on $\mathrsfs{B}$: for two states $s$ and $t$ we have
$s\sigma t$ if and only if $s\dt x=t\dt x$ for each $x\in\Sigma.$
\begin{lemma}\label{congr} If $\mathrsfs{B}$ is synchronizing, then $\mathrsfs{B}/\sigma$ is also synchronizing, and
$$\rt(\mathrsfs{B}/\sigma)\le \rt(\mathrsfs{B})\le \rt(\mathrsfs{B}/\sigma)+1.$$
\end{lemma}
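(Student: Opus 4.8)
The left inequality, together with the assertion that $\mathrsfs{B}/\sigma$ is synchronizing, is simply the general observation stated just before the lemma, specialized to $\rho=\sigma$; so the plan begins by checking that $\sigma$ is genuinely a congruence. This is immediate: if $s\,\sigma\,t$, then by definition $s\dt x=t\dt x$ for every letter $x$, and in particular $s\dt x$ and $t\dt x$ are $\sigma$-equivalent, which is exactly the congruence condition. Hence the only real content is the right-hand inequality $\rt(\mathrsfs{B})\le\rt(\mathrsfs{B}/\sigma)+1$.

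For that inequality the plan is to take a shortest reset word $v$ of $\mathrsfs{B}/\sigma$, of length $\rt(\mathrsfs{B}/\sigma)$, and show that appending a single letter converts it into a reset word for $\mathrsfs{B}$. Writing $\pi$ for the natural projection of $Q$ onto the set of $\sigma$-classes, one has the commutation $\pi(s)\dt w=\pi(s\dt w)$ for every word $w$. Thus if $v$ synchronizes $\mathrsfs{B}/\sigma$ to a class $[c]$, then every state of $Q$ is sent by $v$ into this one class; that is, $Q\dt v\subseteq[c]$.

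The key step is to exploit the defining property of $\sigma$: all states lying in a single $\sigma$-class have the same image under each individual letter. I would pick any letter $x\in\Sigma$ (the alphabet is nonempty whenever $|Q|>1$, and the case $|Q|=1$ is trivial), and observe that $[c]\dt x$ is then a single state of $Q$. Consequently $Q\dt vx=(Q\dt v)\dt x\subseteq[c]\dt x$ is a singleton, so $vx$ resets $\mathrsfs{B}$ and has length $\rt(\mathrsfs{B}/\sigma)+1$, giving the claimed bound. This also covers the degenerate subcase $\rt(\mathrsfs{B}/\sigma)=0$, where $Q$ is itself a single $\sigma$-class and one letter already synchronizes.

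The argument is essentially obstruction-free; the only point that needs care is the translation between \emph{synchronization in the factor} and \emph{containment in a single $\sigma$-class of $\mathrsfs{B}$}, together with the observation that collapsing such a class costs exactly one further letter. The loss is bounded by a single letter precisely because $\sigma$ identifies two states only when they already agree on every letter.
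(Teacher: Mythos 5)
Your proof is correct and follows essentially the same route as the paper: take a reset word for $\mathrsfs{B}/\sigma$, observe that it drives all states of $\mathrsfs{B}$ into a single $\sigma$-class, and note that appending any one letter collapses that class to a single state by the very definition of $\sigma$. Your additional checks (that $\sigma$ is indeed a congruence, and the degenerate cases) are fine but not needed beyond what the paper's argument already contains.
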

\begin{proof}
The inequality $\rt(\mathrsfs{B}/\rho)\le \rt(\mathrsfs{B})$ is trivial. 
The states of $\mathrsfs{B}/\sigma$ are congruence classes $[s]^\sigma$ of the states $s$ of the automaton $\mathrsfs{B}$.
Let us consider a synchronizing word $w$ for the automaton $\mathrsfs{B}/\sigma$. 
For every pair of states $s$ and $s'$ of the original automaton $\mathrsfs{B}$ we have $s\dt w\,\sigma\, s'\dt w.$
But this means that $s\dt wx=s'\dt wx$ for any letter $x\in\Sigma$, thus, the word $wx$ resets the automaton $\mathrsfs{B}$.
Thus we have $\rt(\mathrsfs{B})\le \rt(\mathrsfs{B}/\sigma)+1$.
\end{proof}
\begin{lemma}\label{cong_A} If $n>q$, then $\mathrsfs{W}(n,q,p)/\sigma$ is equal to $\mathrsfs{W}(n-1,q,p)$, and $$\rt(\mathrsfs{W}(n,q,p))=\rt(\mathrsfs{W}(n-1,q,p))+1.$$
\end{lemma}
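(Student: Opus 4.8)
The plan is to prove the two assertions separately: first the structural identity $\mathrsfs{W}(n,q,p)/\sigma=\mathrsfs{W}(n-1,q,p)$, and then the numerical relation between the reset thresholds. Once the identity is in hand, the estimate $\rt(\mathrsfs{W}(n,q,p))\le\rt(\mathrsfs{W}(n-1,q,p))+1$ is free from Lemma~\ref{congr}, so the genuine work is the identity together with a lower bound forcing the threshold to increase by exactly one. Throughout I use that $\mathrsfs{W}(n,q,p)$ really does have cycles of lengths $p$ and $q$, which for $n>q$ means $n\le q+p-2$; equivalently $n-p$ is a generic state different from $q-1$, and $p\ge 3$.

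First I would pin down the congruence $\sigma$. Since $n>q$, the rules $0\dt a=q$ and $0\dt b=1$ make $0$ the only state on which the two letters disagree, so $0$ forms a singleton class; every other state $i$ satisfies $i\dt a=i\dt b$, hence two such states are $\sigma$-equivalent precisely when they have a common successor. Inspecting the transition table shows that $n-p+1$ is the only state that is a common end of two distinct equally labelled edges: both the generic state $n-p$ and the special state $n-1$ send either letter to $n-p+1$. Thus $\{n-p,n-1\}$ is the unique nontrivial $\sigma$-class and $\mathrsfs{W}(n,q,p)/\sigma$ has $n-1$ states.

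Next I would build the isomorphism explicitly. Relabel every singleton class $[i]^\sigma$ by $i$ and the merged class $\{n-p,n-1\}$ by $n-p$, so the state set becomes $\{0,1,\ldots,n-2\}$. All successor edges among singletons are inherited verbatim, and the merged state $n-p$ sends both letters to $n-p+1$, matching the generic transition of $\mathrsfs{W}(n-1,q,p)$ at that state. The only edge needing a case distinction is $0\dt a$: it lands on the singleton $q$ when $n-1>q$, and on the merged class $n-p=q-p+1$ when $n-1=q$, which reproduces verbatim the two cases in the definition of $\mathrsfs{W}(n-1,q,p)$ (and accounts, respectively, for the edge into the merge point coming from the top state $n-2$ or from $0$). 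This establishes $\mathrsfs{W}(n,q,p)/\sigma=\mathrsfs{W}(n-1,q,p)$, whence Lemma~\ref{congr} yields $\rt(\mathrsfs{W}(n-1,q,p))\le\rt(\mathrsfs{W}(n,q,p))\le\rt(\mathrsfs{W}(n-1,q,p))+1$.

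The principal difficulty is to rule out equality and force the gap to be exactly one. The key observation is that the two automata synchronize to different states. In any Wielandt-type automaton the last letter of a shortest reset word collapses a set of size at least two, so that word must synchronize to the unique common end of two distinct equally labelled edges; this target is $n-p+1$ for $\mathrsfs{W}(n,q,p)$ but $(n-1)-p+1=n-p$ for $\mathrsfs{W}(n-1,q,p)$. So I would suppose the thresholds were equal to some $L$ and take a shortest reset word $w$ of $\mathrsfs{W}(n,q,p)$, with $|w|=L$ and $Q\dt w=\{n-p+1\}$. Read in the quotient, $w$ synchronizes $\mathrsfs{W}(n-1,q,p)$ to the class $[n-p+1]^\sigma$, a singleton relabelled $n-p+1$; yet $|w|=L=\rt(\mathrsfs{W}(n-1,q,p))$ makes $w$ a shortest reset word of the quotient, which must therefore synchronize it to $n-p$. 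Since $n-p+1\neq n-p$ this is a contradiction, so the inequality is strict and, with the upper estimate, $\rt(\mathrsfs{W}(n,q,p))=\rt(\mathrsfs{W}(n-1,q,p))+1$. I expect the crux to be exactly this point: verifying that a shortest reset word is forced onto the unique merge point, and that this merge point differs between the two automata. Note that this route needs only the structural identity and Lemma~\ref{congr}, not the closed form of the reset threshold.
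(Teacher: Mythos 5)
Your proof is correct, and it differs from the paper's precisely at the step that forces the strict increase. Both arguments rest on the same two facts: the unique nontrivial $\sigma$-class of $\mathrsfs{W}(n,q,p)$ is $\{n-p,n-1\}$, so the factor automaton is $\mathrsfs{W}(n-1,q,p)$; and any shortest reset word must end at the unique state receiving two equally labelled edges from distinct sources. The paper applies the second fact only to $\mathrsfs{W}(n,q,p)$ and then argues directly: writing a shortest word as $w=w'x$, the penultimate image $Q\dt w'$ lies in the preimage of $n-p+1$, hence by minimality equals $\{n-p,n-1\}$, which is exactly the nontrivial class; so $w'$ already resets the quotient, giving $\rt(\mathrsfs{W}(n-1,q,p))\le\rt(\mathrsfs{W}(n,q,p))-1$, with Lemma~\ref{congr} supplying the converse inequality. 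You instead take both inequalities from Lemma~\ref{congr} and rule out equality by contradiction: under equality, a shortest word for $\mathrsfs{W}(n,q,p)$ would also be shortest for the quotient, yet it synchronizes the quotient to (the singleton class of) $n-p+1$, while shortest words of $\mathrsfs{W}(n-1,q,p)$ are forced onto its own merge point $n-p$. The cost of your route is that the merge-point fact must be verified for both automata — including the case $n-1=q$, where the two equally labelled edges into $q-p+1$ are the $a$-edges coming from $q-p$ and from $0$, a configuration different from the $n-1>q$ case but with the same conclusion; the gain is that you never need the preimage computation or the exact identification of the penultimate image. Two incidental merits of your write-up: you justify the identification $\mathrsfs{W}(n,q,p)/\sigma=\mathrsfs{W}(n-1,q,p)$ by an explicit relabelling, where the paper simply points to the figure; and your observation that $n>q$ together with the requirement of cycle lengths exactly $p$ and $q$ forces $n\le p+q-2$ (so that $n-p\neq q-1$ and $\{n-p,n-1\}$ really is a $\sigma$-class) repairs a genuine imprecision in the paper's standing assumption $n<p+q$.
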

\begin{proof}
Let $w$ be a word of minimal length, synchronizing the automaton $\mathrsfs{W}(n,q,p)$. As in the proof of theorem~\ref{rt_A1}, the word $w$ resets $\mathrsfs{W}(n,q,p)$ to the state $n-p+1$. On the last step $w$ brings the states $\{n-1,n-p\}$ to the state $n-p+1$. Hence $w=w'x$, where $x\in\Sigma$, and $w'$ brings the automaton $\mathrsfs{W}(n,q,p)$ to the set $\{n-1,n-p\}$. But
these two states form the unique non-trivial $\sigma$-class (see Fig.~\ref{fig_A1m}). Thus the factor automaton $\mathrsfs{W}(n,q,p)/\sigma$ is equal to the Wielandt-type automaton $\mathrsfs{W}(n-1,q,p)$. Moreover, it is synchronized by $w'$. Thus, $\rt(\mathrsfs{W}(n-1,q,p))\le \rt(\mathrsfs{W}(n,q,p))-1$.
On the other hand, by lemma~\ref{congr} we have $\rt(\mathrsfs{W}(n-1,q,p))\ge \rt(\mathrsfs{W}(n,q,p))-1$. Therefore, we get the required equality.  
\end{proof}
\begin{theorem}
\label{wielandt_general}
The reset threshold of the Wielandt-type automaton $\mathrsfs{W}(n,q,p)$ is equal to $(p-1)(q-1)+n-p.$
\end{theorem}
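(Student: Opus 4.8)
The plan is to prove the statement by induction on $n$, running from the base case $n=q$ up through the allowed range $q \le n < p+q$. The base case $n=q$ is precisely Theorem~\ref{rt_A1}, which gives $\rt(\mathrsfs{W}(q,q,p)) = (p-1)(q-1)+q-p$; this already agrees with the claimed formula $(p-1)(q-1)+n-p$ evaluated at $n=q$. So the base case requires no new argument beyond invoking the theorem already established.

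For the inductive step I would assume $n>q$ together with the inductive hypothesis that $\rt(\mathrsfs{W}(n-1,q,p)) = (p-1)(q-1)+(n-1)-p$. The key tool is Lemma~\ref{cong_A}, which identifies the factor automaton $\mathrsfs{W}(n,q,p)/\sigma$ with $\mathrsfs{W}(n-1,q,p)$ and, more importantly, asserts the exact relation $\rt(\mathrsfs{W}(n,q,p)) = \rt(\mathrsfs{W}(n-1,q,p)) + 1$. Substituting the inductive hypothesis then gives $\rt(\mathrsfs{W}(n,q,p)) = (p-1)(q-1)+(n-1)-p + 1 = (p-1)(q-1)+n-p$, which is exactly the desired expression. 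This closes the induction.

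Since the substantive work has already been carried out — the explicit reset word and matching lower bound for $n=q$ in Theorem~\ref{rt_A1}, and the congruence-collapsing argument of Lemma~\ref{cong_A} that peels off one state while raising the reset threshold by exactly one — the theorem itself follows immediately by stitching these together. I therefore do not expect a genuine obstacle here; the proof is essentially a telescoping of the recurrence $\rt(\mathrsfs{W}(n,q,p)) = \rt(\mathrsfs{W}(n-1,q,p))+1$ down to the anchor at $n=q$. The only points deserving a moment's care are purely bookkeeping: verifying that the formula is correctly seeded at $n=q$, and checking that each application of Lemma~\ref{cong_A} keeps $n$ within the admissible interval $q \le n < p+q$, which holds automatically because decreasing $n$ by one never drops it below $q$.
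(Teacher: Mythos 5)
Your proof is correct and follows essentially the same route as the paper: the authors also anchor at $n=q$ via Theorem~\ref{rt_A1} and apply Lemma~\ref{cong_A} exactly $n-q$ times, each application reducing the reset threshold by one, which is precisely your induction stated in telescoped form.
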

\begin{proof}
Since there are $n-q$ states on the path from the state $0$ to $n-p+1$, lemma~\ref{cong_A} can be applied $n-q$ times
to obtain the Wielandt-type automaton $\mathrsfs{W}(q,q,p)$. By theorem~\ref{rt_A1}, its reset threshold equals $(p-1)(q-1)+q-p$. 
Each time lemma~\ref{cong_A}
is applied, the reset threshold is decreased strictly by 1. Thus the reset threshold of the automaton $\mathrsfs{W}(n,q,p)$
is equal to $(p-1)(q-1)+n-p$.
\end{proof}

\section{Dulmage-Mendelsohn-type automata}
\label{sec:second}
As in the previous section, let $q$ and $p$ be two co-prime positive integers, and $q>p$. Let $k$ be a positive integer such that $k<\min\{p,q-p+1\}$.
Here we consider Dulmage-Mendelsohn-type automata, which are the colorings of the following primitive digraph $D(q,p,k)$ (see Fig.~\ref{fig_B}).
Its vertex set is $\{0,\ldots,q-1\}$, the set of edges is $\{(i,(i+1)\Mod q)\mid 0\le i< q\}\cup\{(0,q-p+1),(k,(q-p+k+1)\Mod q)\}.$
Note, that $D(q,p,k)$ has exactly one cycle of length $q$ and two cycles of length $p$.
The digraph $D(q,p,k)$ has only two non-isomorphic colorings $\mathrsfs{D}^{aa}(q,p,k)$ and $\mathrsfs{D}^{ab}(q,p,k)$ (see Fig.~\ref{fig_B1}). 
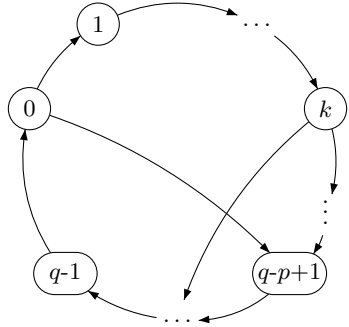
\begin{figure}[ht]
 \begin{center}
  \unitlength=4pt
    \begin{picture}(30,25)(0,2)
    \gasset{Nw=4,Nh=4,Nmr=2}
    \thinlines
    \node[Nframe=n](A6)(15,0){$\ldots$}
\node(A0)(1,20){$0$}
\node(A1)(7.5,28){$1$}
\node[Nframe=n](A2)(22.5,28){$\ldots$}
\node(A3)(29,20){$k$}
\node[Nw=6](A4)(4.4,4.4){$q$-$1$}
\node[Nw=7](A5)(25.6,4.4){$q$-$p$+$1$}
\node[Nframe=n,NLdist=1](A7)(29,10){$\vdots$}
    \drawedge[curvedepth=1](A0,A1){}
    \drawedge[curvedepth=2](A1,A2){}
    \drawedge[curvedepth=1](A2,A3){}
    \drawedge[curvedepth=2](A4,A0){}
    \drawedge[curvedepth=1](A3,A7){}
    \drawedge[curvedepth=1](A7,A5){}
    \drawedge[curvedepth=1.5](A5,A6){}
    \drawedge[curvedepth=1](A6,A4){}
    \drawedge[curvedepth=2](A0,A5){}
    \drawedge[curvedepth=-2](A3,A6){}
    \end{picture}
 \end{center}
  \caption{Digraph $D(q,p,k)$}
  \label{fig_B}
\end{figure}
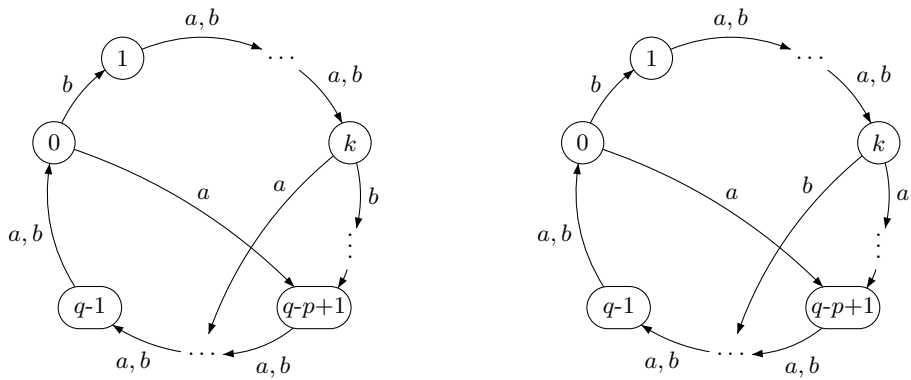
\begin{figure}[ht]
 \begin{center}
  \unitlength=4pt
    \begin{picture}(60,35)(50,0)
    \gasset{Nw=4,Nh=4,Nmr=2}
    \thinlines

\node[Nframe=n](B6)(55,0){$\ldots$}
\node[Nframe=n,NLdist=1](B7)(69,10){$\vdots$}
\node(B0)(41,20){$0$}
\node(B1)(47.5,28){$1$}
\node[Nframe=n](B2)(62.5,28){$\ldots$}
\node(B3)(69,20){$k$}
\node[Nw=6](B4)(44.4,4.4){$q$-$1$}
\node[Nw=7](B5)(65.6,4.4){$q$-$p$+$1$}
    \drawedge[curvedepth=1](B0,B1){$b$}
    \drawedge[curvedepth=2](B1,B2){$a,b$}
    \drawedge[curvedepth=1](B2,B3){$a,b$}
    \drawedge[curvedepth=2](B4,B0){$a,b$}
    \drawedge[curvedepth=1](B3,B7){$b$}
    \drawedge[curvedepth=1](B7,B5){}
    \drawedge[curvedepth=1.5](B5,B6){$a,b$}
    \drawedge[curvedepth=1](B6,B4){$a,b$}
    \drawedge[curvedepth=2](B0,B5){$a$}
    \drawedge[curvedepth=-2,ELpos=30,ELside=r](B3,B6){$a$}

\node[Nframe=n](C6)(105,0){$\ldots$}
\node[Nframe=n,NLdist=1](C7)(119,10){$\vdots$}
\node(C0)(91,20){$0$}
\node(C1)(97.5,28){$1$}
\node[Nframe=n](C2)(112.5,28){$\ldots$}
\node(C3)(119,20){$k$}
\node[Nw=6](C4)(94.4,4.4){$q$-$1$}
\node[Nw=7](C5)(115.6,4.4){$q$-$p$+$1$}
    \drawedge[curvedepth=1](C0,C1){$b$}
    \drawedge[curvedepth=2](C1,C2){$a,b$}
    \drawedge[curvedepth=1](C2,C3){$a,b$}
    \drawedge[curvedepth=2](C4,C0){$a,b$}
    \drawedge[curvedepth=1](C3,C7){$a$}
    \drawedge[curvedepth=1](C7,C5){}
    \drawedge[curvedepth=1.5](C5,C6){$a,b$}
    \drawedge[curvedepth=1](C6,C4){$a,b$}
    \drawedge[curvedepth=2](C0,C5){$a$}
    \drawedge[curvedepth=-2,ELpos=30,ELside=r](C3,C6){$b$}

    \end{picture}
 \end{center}
  \caption{Two Dulmage-Mendelsohn-type automata $\mathrsfs{D}^{aa}(q,p,k)$ and $\mathrsfs{D}^{ab}(q,p,k)$}
  \label{fig_B1}
\end{figure}
\begin{lemma}
\label{B2_synstate}
\begin{enumerate}
\item[(i)] Any shortest synchronizing word of the automaton $\mathrsfs{D}^{ab}(q,p,k)$ synchronizes it to the state $q-p+1$.
\item[(ii)] Any shortest synchronizing word of the automaton $\mathrsfs{D}^{aa}(q,p,k)$ synchronizes it to the state $q-p+1$ when $k < q - p$.

\end{enumerate}
\end{lemma}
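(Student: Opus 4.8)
My plan is to determine the synchronizing state purely from the last letter of a shortest reset word, and then to exclude the single ``wrong'' candidate by a length comparison resting on Lemma~\ref{rt}.

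First I would record the standard observation already used in Theorem~\ref{rt_A1}: if $w=w'x$ is a shortest reset word synchronizing the automaton to a state $s$, then $|Q\dt w'|\ge 2$ while $(Q\dt w')\dt x=\{s\}$, so $s$ is the common image under the letter $x$ of two distinct states. Reading off the transition tables, in $\mathrsfs{D}^{ab}(q,p,k)$ the letter $a$ identifies two states only at $q-p+1$ (the images of $0$ and $q-p$) and the letter $b$ only at $q-p+k+1$ (the images of $k$ and $q-p+k$); in $\mathrsfs{D}^{aa}(q,p,k)$ the letter $b$ is the full cyclic permutation $i\mapsto i+1\pmod q$ and so identifies nothing, while $a$ identifies states exactly at $q-p+1$ and at $q-p+k+1$. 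Here the hypothesis $k<q-p$ in part (ii) is precisely what keeps $q-p$ distinct from the branching state $k$, so that $q-p+1$ really is a common end. Thus in both automata the synchronizing state is one of $s_1=q-p+1$ and $s_2=q-p+k+1$ (with $s_2\equiv 0$ when $k=p-1$), and everything reduces to excluding $s_2$.

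To exclude $s_2$, suppose for contradiction that some shortest reset word synchronizes the automaton to $s_2$; then hypothesis (i) of Lemma~\ref{rt} holds for $s=s_2$. I would verify hypothesis (ii) with $t=k+1$ and $\ell=q-p$: every word of length $q-p$ sends $k+1$ to $q-p+k+1$, since along the stretch $k+1,k+2,\ldots,q-p+k$ neither of the branching states $0$ and $k$ is encountered, so both letters act as $i\mapsto i+1$ and the walk reaches $q-p+k+1$ after exactly $q-p$ steps no matter which letters are chosen. Lemma~\ref{rt} then forces $\rt(\mathrsfs{A})\ge (p-1)(q-1)+(q-p)$.

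It remains to contradict this by exhibiting a reset word to $s_1$ that is strictly shorter than $(p-1)(q-1)+(q-p)$; I expect this construction to be the main obstacle. The naive Wielandt word of Theorem~\ref{rt_A1} (collapse onto a $p$-cycle by a power of $a$, contract it to a two-element set by a $(ba^{q-1})^{p-2}$-type word as in Lemma~\ref{Cycle_syn}, then merge) ignores the edge at $k$ and yields only length $(p-1)(q-1)+(q-p)$, which is not enough. The point is that the extra edge at $k$ must be used as well: routing one of the merges through the collision at $k$ shortens the final phase by $k$ letters and produces a reset word $w_1$ with $Q\dt w_1=\{q-p+1\}$ and $|w_1|=(p-1)(q-1)+q-p-k$ (for $p=2$, $q=3$, $k=1$ this is already visible: $ba$ resets to $q-p+1=2$ in two steps, using both special edges). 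Since $(p-1)(q-1)+q-p-k<(p-1)(q-1)+(q-p)$, we get $\rt(\mathrsfs{A})\le |w_1|<(p-1)(q-1)+(q-p)$, contradicting the bound of the previous paragraph. Hence no shortest reset word synchronizes to $s_2$, and every shortest reset word synchronizes to $q-p+1$. The remaining care is with the boundary cases $k=p-1$ (where $s_2\equiv 0$ and the stretch ends at the wrap $q-1\mapsto 0$) and the excluded value $k=q-p$ in $\mathrsfs{D}^{aa}$ (where $q-p+1$ ceases to be a common end and the conclusion of (ii) genuinely fails), both of which only require re-reading the transition table.
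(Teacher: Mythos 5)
Your first two steps are sound: the reduction to the two candidate states $s_1=q-p+1$ and $s_2=q-p+k+1$ (including the observation that the hypothesis $k<q-p$ is what makes $s_1$ a merging state in $\mathrsfs{D}^{aa}$) matches the paper's opening move, and your application of Lemma~\ref{rt} with $t=k+1$, $\ell=q-p$ is legitimate, since hypothesis (i) of that lemma is existential, so assuming some shortest reset word ends at $s_2$ does yield $\rt(\mathrsfs{A})\ge(p-1)(q-1)+q-p$. The genuine gap is the third step: your whole argument hinges on exhibiting a reset word of length strictly less than $(p-1)(q-1)+q-p$, and you never produce one. The phrase ``routing one of the merges through the collision at $k$ shortens the final phase by $k$ letters,'' supported by the single instance $p=2,q=3,k=1$, is a heuristic, not a construction. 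What you are assuming is precisely the nontrivial upper-bound half of Theorems~\ref{B2_rt} and~\ref{B1_rt}: one must write down the explicit words $ba^{q-p-k-1}(ba^{q-1})^{p-2}ba^{q-p}$ (for $\mathrsfs{D}^{ab}$) and $a^{q-p-k}(ba^{k-1}ba^{q-k-1})^{p-2}ba^{k-1}ba^{q-p-k}$ (for $\mathrsfs{D}^{aa}$) and verify synchronization by tracking how the states outside the $p$-cycle fold into it, via an analysis in the style of Lemma~\ref{Cycle_syn}. So as written, your proof of the lemma rests on an unproven claim that is at least as hard as the theorems the lemma is meant to support.

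The paper's own proof needs no global construction at all. Assuming a shortest $w$ resets the automaton to $s_2$, it computes the full preimage of $s_2$ under the length-$k$ suffix of $w$, which is $\{1,q-p+1\}$. If $k=q-p$, no single letter can map any set onto $\{1,q-p+1\}$, because every edge into $1$ is labeled $b$ while (in that case) every edge into $q-p+1$ is labeled $a$ --- an immediate contradiction. If $k<q-p$, the set $\{1,q-p+1\}$ can only arise as $\{0,q-p\}\dt b$; but $\{0,q-p\}\dt a$ is already the singleton $q-p+1$, so replacing the last $k+1$ letters of $w$ by the single letter $a$ produces a strictly shorter reset word, again a contradiction. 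This local word-surgery argument is elementary and self-contained. Your route could in principle be repaired --- the upper-bound constructions in Theorems~\ref{B2_rt} and~\ref{B1_rt} do not invoke Lemma~\ref{B2_synstate}, so proving them first and then deducing the lemma by your length comparison is not circular --- but that repair requires importing the hardest computations of the paper into the lemma, which is exactly what the paper's argument is designed to avoid.
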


\begin{proof}
Part (i). Let $t=q-p+k+1$. 
Note, that $t=k\dt b=(q-p+k)\dt a=(q-p+k)\dt b$. 
Any shortest synchronizing word $w$
can synchronize the automaton $\mathrsfs{D}^{ab}(q,p,k)$ either to $q-p+1$ or $t$. Suppose, that $w$ synchronizes $\mathrsfs{D}^{ab}(q,p,k)$ to the state $t$. By lemma~\ref{rt} we have $|w|\ge (p-1)(q-1).$ Moreover, $(p-1)(q-1)>k.$
Consider the suffix $v$ of $w$ of length $k$. It is easy to see, that the full preimage $t\dt v^{-1}$ of the state $t$
under the action of the word $v$ is equal to $\{1,q-p+1\}.$
If $k=q-p$, then the two incoming edges to the state $q-p+1$ are labeled by the letter $a$, while the only incoming edge
to the state $1$ is labeled by the letter $b$. A contradiction. If $k\ne q-p$, then the set $\{1,q-p+1\}$
was necessarily obtained from the set $\{0,q-p\}$ by applying the letter $b$. But $\{0,q-p\}\dt a=q-p+1$. Therefore,
we can replace the suffix of $w$ of length $k+1$ by the letter $a$, in order to obtain a shorter synchronizing word.
A contradiction. Hence the word $w$ synchronizes the automaton $\mathrsfs{D}^{ab}(q,p,k)$ to the state $q-p+1$.

The proof of the part (ii) of the lemma is analogous to the part (i) with only minor changes.
\end{proof}

\begin{theorem}
\label{B2_rt}
The reset threshold of the Dulmage-Mendelsohn-type automaton $\mathrsfs{D}^{ab}(q,p,k)$ is equal to $(p-1)(q-1)+q-p-k.$
\end{theorem}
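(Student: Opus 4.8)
The plan is to match the lower bound coming from Lemma~\ref{rt} with an explicit reset word of exactly that length. For the lower bound I would apply Lemma~\ref{rt} with $s=q-p+1$, which is legitimate because Lemma~\ref{B2_synstate}(i) guarantees that every shortest reset word sends $Q$ to $q-p+1$. It then remains to produce a state $t$ and a length $\ell$ satisfying condition (ii). I take $t=k+1$ and $\ell=q-p-k$. The key observation is that the only two states of $\mathrsfs{D}^{ab}(q,p,k)$ at which $a$ and $b$ act differently are $0$ and $k$; along the path $k+1,k+2,\ldots,q-p,q-p+1$ a letter is applied only at the states $k+1,\ldots,q-p$, none of which equals $0$ or $k$. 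Hence every word $u$ of length $q-p-k$ satisfies $(k+1)\dt u=q-p+1$, and Lemma~\ref{rt} yields $\rt(\mathrsfs{D}^{ab}(q,p,k))\ge (p-1)(q-1)+q-p-k$. (When $k=q-p$ the length $\ell$ degenerates to $0$ and one simply uses the first part of Lemma~\ref{rt}, which already gives the claimed value $(p-1)(q-1)$.)

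For the upper bound I would exhibit a reset word $w$ with $Q\dt w=\{q-p+1\}$ and $|w|=(p-1)(q-1)+q-p-k$. The construction proceeds in two stages. First I drive $Q$ down to a two-element set, using a cycle-collapsing step in the spirit of Lemma~\ref{Cycle_syn} and of the proof of Theorem~\ref{rt_A1}. Second, I merge the resulting two-element set into the singleton $q-p+1$. The point where $\mathrsfs{D}^{ab}(q,p,k)$ differs from the Wielandt-type automaton is the second chord $(k,q-p+k+1)$, labelled $b$: the whole strategy is to route the computation so that this chord is actually exercised, producing a word shorter than the Wielandt word $a^{q-p}(ba^{q-1})^{p-2}ba^{q-p}$ by precisely $k$ letters.

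The hard part will be the merging stage together with the exact accounting of this saving. Merely collapsing $Q$ into the cycle $\{0,q-p+1,\ldots,q-1\}$ by $a^{q-p}$ and then imitating the Wielandt word leaves the second chord untouched and only reproduces the bound $(p-1)(q-1)+q-p$; to gain the extra $k$ one must instead let the two surviving states chase each other around the two length-$p$ cycles, alternately invoking both chords. I would verify correctness by tracking the two states through $w$ and checking, via the coin problem Theorem~\ref{coin_problem} exactly as in the argument of Lemma~\ref{rt}, that the two chords together shorten the merge by a net of $k$ steps. Matching the length so obtained with the lower bound above then closes the proof.
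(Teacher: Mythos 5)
Your lower bound is correct and coincides with the paper's argument: Lemma~\ref{B2_synstate}(i) pins the synchronization state to $q-p+1$, the pair $t=k+1$, $\ell=q-p-k$ satisfies condition (ii) of Lemma~\ref{rt} because no state on the path from $k+1$ to $q-p+1$ is one of the two branching states $0$ and $k$, and your remark about the degenerate case $k=q-p$ (where $\ell=0$ and only the first part of Lemma~\ref{rt} is needed) is a point the paper itself glosses over.

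The upper bound, however, is a genuine gap: you never produce a word, you only describe a plan whose hardest step (``the merging stage together with the exact accounting of this saving'') is explicitly deferred, and invoking Theorem~\ref{coin_problem} cannot verify that a specific word synchronizes --- the coin problem yields non-representability (lower-bound) statements, not correctness of a candidate word. Moreover, your strategic picture of where the $k$ letters are saved is wrong. In the paper's construction for $k<q-p$ the word is $w_2=ba^{q-p-k-1}(ba^{q-1})^{p-2}ba^{q-p}$: the tail $(ba^{q-1})^{p-2}ba^{q-p}$ is structurally identical to the Wielandt word and is handled by Lemma~\ref{Cycle_syn}; the entire saving of $k$ letters occurs in the \emph{initial collapse}, where the prefix $ba^{q-p-k-1}$ (of length $q-p-k$) replaces $a^{q-p}$. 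This prefix does \emph{not} bring everything into the cycle $C=\{0,q-p+1,\ldots,q-1\}$ --- the states $1,\ldots,k-1$ land at $q-p-k+1,\ldots,q-p-1$ outside $C$ --- but the chord $(k,q-p+k+1)$ labelled $b$ guarantees that each such stray state moves under $ba^{q-1}$ exactly like some state of $C$ (namely $(q-p-i)\dt ba^{q-1}=(q-i)\dt ba^{q-1}$, with the chord absorbing the exceptional case $q-p-i=k$), so the stray states are ``virtually'' on the cycle and the tail finishes the job. There is no stage in which two surviving states ``chase each other around the two length-$p$ cycles, alternately invoking both chords''; without the correct placement of the saving, your plan as stated would not assemble into a word of the required length $(p-1)(q-1)+q-p-k$. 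The separate case $k=q-p$, where the word $(ba^{q-1})^{p-2}ba^{q-p}$ of length exactly $(p-1)(q-1)$ works, also needs its own (short) verification, which your sketch does not supply.
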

\begin{proof}
Let $w$ be a reset word for the automaton $\mathrsfs{D}^{ab}(q,p,k)$ having minimal length.
By lemma~\ref{B2_synstate} the word $w$ synchronizes the automaton to the state  $q-p+1$. 
Note, that any word of length
$q - p - k$ brings the state $k + 1$ to the state $q-p+1$. Lemma~\ref{rt}
implies $|w|\ge(p-1)(q-1) + q - p - k$.

First let us assume that $k=q-p$. In this case it remains to prove that the
word $w_1=(ba^{q-1})^{p-2}ba^{q-p}$ is synchronizing.
Let $C$ be the cycle $\{0,q-p+1,q-p+2,\ldots, q-1\}$. Note, that the word $ba^{q-1}$
maps all the states, that do not belong to $C$, to the set $C.ba^{q-1}$.
Namely, $k\dt ba^{q-1}=(t-1)\dt ba^{q-1}$,
where $t=k\dt b$; $(k-1)\dt ba^{q-1}=(q-1)\dt ba^{q-1},(k-2)\dt ba^{q-1}=(q-2)\dt ba^{q-1},\ldots, 1\dt ba^{q-1}=(q-k+1=p+1)\dt ba^{q-1}.$
Thus it is enough to consider the action of the word $w_1$ on the cycle $C$.
By lemma~\ref{Cycle_syn} we have $C \dt (ba^{q-1})^{p-2} = \{0, q - r\}$, where $r$ is the remainder
of the division of $q$ by $p$.
But then it is easy to see, that $0\dt ba^{q-p}=(q-r)\dt ba^{q-p}=q-p+1.$

Now assume that $k<q-p$.
Let us show that the word $$w_2=ba^{q-p-k-1}(ba^{q-1})^{p-2}ba^{q-p}$$ is synchronizing.
All the states in the range from $k$ to $q-p$ are mapped into
the cycle $C$ under the action of the prefix $ba^{q-p-k-1}$.
This prefix maps the remaining states lying outside the cycle $C$, i.e. $1,2, \ldots, k - 1$,
to the states ranging from $q-p-k + 1$ to $q-p-1$.
Namely, $(k-i)\dt ba^{q-p-k-1}=q-p-i$ for $1\le i\le k-1$.
The action of the word $ba^{q-1}$ on the states in $\{q-p-k + 1, \ldots, q-p-1\}$
coincides with the action of this word on some states in the cycle $C$.
More precisely, we have $(q-p-i)\dt ba^{q-1}=(q-i)\dt ba^{q-1}$ for  $1\le i\le k-1$,
provided that for no such $i$ we have $q-p-i=k$. If $q-p-i=k$ for some $i$,
then we have $k\dt ba^{q-1}=(t-1)\dt ba^{q-1}$, where $t=k\dt b$.
In both cases the condition $k<p$ implies that all the resulting states
$t-1,q-1,\ldots, q-k+1$ lie on the cycle $C$.
Hence the word $w_2$ brings the automaton $\mathrsfs{D}^{ab}(q,p,k)$ into the subset of $C\dt (ba^{q-1})^{p-2}ba^{q-p}$.
As we have already seen, the latter set is the singleton $q-p+1$.
\end{proof}

\begin{theorem}
\label{B1_rt} The reset threshold of the Dulmage-Mendelsohn-type automaton $\mathrsfs{D}^{aa}(q,p,k)$ equals $(p-1)(q-1)+q-p-k$ if $k<q-p$, and  $(p-1)(q-1)+2(q-p)$ if $k=q-p$.
\end{theorem}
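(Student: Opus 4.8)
The plan is to prove the two cases separately, each time matching a lower and an upper bound, and to lean on the $\mathrsfs{D}^{ab}$-machinery wherever the combinatorics transfers. The organizing structural fact is that in $\mathrsfs{D}^{aa}(q,p,k)$ the letter $b$ realizes the full $q$-cycle, $i\dt b=(i+1)\Mod q$, while the letter $a$ carries \emph{both} shortcuts; hence $a$ is the only letter able to merge states, and all merging is controlled by the preimages of the shortcut targets $q-p+1$ and $t=(q-p+k+1)\Mod q$. For the case $k<q-p$ the lower bound is the easy half and follows the proof of Theorem~\ref{B2_rt}: by Lemma~\ref{B2_synstate}(ii) a shortest reset word synchronizes to $q-p+1$, and since the states $k+1,k+2,\dots,q-p$ are all distinct from the two branch vertices $0$ and $k$, any word of length $q-p-k$ drives $k+1$ to $q-p+1$ by pure advancement. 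Applying Lemma~\ref{rt} with $s=q-p+1$, $t=k+1$, $\ell=q-p-k$ gives $\rt(\mathrsfs{D}^{aa}(q,p,k))\ge (p-1)(q-1)+q-p-k$.

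The upper bounds, for both $k<q-p$ and $k=q-p$, require explicit reset words, and here the construction genuinely departs from the $\mathrsfs{D}^{ab}$ case. Because both shortcuts now sit on $a$, the return word $ba^{q-1}$ of Lemma~\ref{Cycle_syn} no longer cycles cleanly: starting from $0\dt b=1$, the block $a^{q-1}$ meets the shortcut at $k$ and fails to close up, so Lemma~\ref{Cycle_syn} does not apply to the cycle $\{0,q-p+1,\dots,q-1\}$ verbatim. I would instead build the words from the merging block $ab^{q-1}$, which fixes every regular state and sends $0\mapsto q-p$ and $k\mapsto q-p+k$, thereby collapsing the two gap-$(q-p)$ pairs $\{0,q-p\}$ and $\{k,q-p+k\}$; since this block is idempotent, I would interleave it with $b$-rotations of carefully chosen lengths that re-align the surviving set onto fresh merge positions, and verify synchronization rank by rank. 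The length bookkeeping is arranged to land on exactly $(p-1)(q-1)+q-p-k$, settling $k<q-p$; I expect this verification, though elementary, to be the most computation-heavy step of the easy case.

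The genuinely difficult case is $k=q-p$, and its obstacle is the lower bound. The first step is to identify the reset state: because $b$ is a bijection and the shortcut at $k=q-p$ pre-empts the usual merge at $q-p+1$, the unique pair collapsed by $a$ is $\{q-p,\,2q-2p\}\mapsto t$, where $t=(2(q-p)+1)\Mod q$; one checks directly that $t$ is the only state with two $a$-preimages, so every shortest reset word ends in $a$ and synchronizes to $t$. The trouble is that the forced-path input to Lemma~\ref{rt} is now too weak. The branch vertex $k=q-p$ lies only $q-p$ steps before $t$, so the longest word-independent path into $t$ has length exactly $q-p$; moreover the two branches at $k$ stay a constant distance $q-p$ apart and never reconverge. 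Hence Lemma~\ref{rt} yields only $\rt\ge(p-1)(q-1)+(q-p)$, which falls short of the claim by precisely $q-p$.

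To recover the missing $q-p$ I would analyze the forced backward evolution of the low-rank configurations. Tracing the penultimate set $\{q-p,2q-2p\}$ backward, one finds that regardless of whether the preceding letter is $a$ or $b$ its predecessor is the rotated pair $\{q-p-1,\,2q-2p-1\}$, and in general every rank-$2$ configuration on the way to synchronization is a gap-$(q-p)$ pair that only $a$ at the aligned position $\{q-p,2q-2p\}$ can merge; since $\gcd(q-p,q)=1$, re-aligning such a pair is costly. Carrying this out level by level, for each of the $q-1$ forced merges, and showing that the re-alignment forced between consecutive merges accumulates the extra $q-p$ on top of the generic $(p-1)(q-1)$, is the heart of the argument; this amounts to a strengthened, automaton-specific refinement of Lemma~\ref{rt}, and the precise bookkeeping proving that the bound \emph{strictly} exceeds $(p-1)(q-1)+(q-p)$ is the step I expect to be the main obstacle. (A natural shortcut-counting identity $t-i\equiv L+s_i(q-p)\ (\Mod q)$, which forces the residues $s_i\Mod q$ to exhaust $\{0,\dots,q-1\}$, is suggestive but loses a factor through merging multiplicities, so it alone does not give the tight constant.) The matching upper bound is then supplied by an explicit reset word to $t$ of length $(p-1)(q-1)+2(q-p)$, constructed and verified exactly as in the second paragraph.
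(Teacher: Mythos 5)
Your proposal breaks down in both halves of the theorem, so let me name the two gaps concretely. First, the upper-bound construction via the block $ab^{q-1}$ cannot reach the claimed length, and this infects both cases since your $k=q-p$ upper bound defers to it. The block has length $q$ and, as a transformation of $Q$, is the identity except for $0\mapsto q-p$ and $k\mapsto q-p+k$; hence one application shrinks the current set by at most $2$, while $b$-rotations are permutations. Your scheme never leaves the full $q$-cycle, so collapsing $q$ states to one needs at least $\lceil(q-1)/2\rceil$ blocks, giving length at least $q\lceil(q-1)/2\rceil$, which exceeds the target $(p-1)(q-1)+q-p-k$ whenever $q\ge 2p+2$: for instance $\rt(\mathrsfs{D}^{aa}(5,2,1))=6$, but your scheme cannot produce a word shorter than $10$. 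A second, independent defect: $b$-rotations preserve cyclic differences and the aligned block merges only pairs at difference $\pm(q-p)\Mod q$, so after the first merges the surviving pairs (e.g. $\{3,4\}$ in $\mathrsfs{D}^{aa}(5,2,1)$) cannot be "re-aligned onto fresh merge positions" by rotations at all; changing gaps forces non-aligned applications of $a$, which your bookkeeping does not include. The paper avoids both problems by first playing $a^{q-p-k}$, which maps all of $Q$ into the $p$-element cycle $C=\{0,q-p+1,\dots,q-1\}$, and then synchronizing inside $C$ with $p-2$ repetitions of $v=ba^{k-1}ba^{q-k-1}$ (total length $\approx pq$, not $q^2$). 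Note that $v$ is exactly the repair of the word $ba^{q-1}$ you declared unusable: the inserted $b$ is played precisely when the orbit of $0$ sits at $k$, stepping it past the shortcut so that $0\dt v=0$, after which the argument of Lemma~\ref{Cycle_syn} goes through (with one extra check for the state $q-k$). For $k=q-p$ the paper's reset word is the analogous $a^{q-p}(ba^{k-1}ba^{q-k-1})^{p-2}ba^{k-1}ba^{q-p}$.

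Second, for $k=q-p$ you have no proof of the lower bound $(p-1)(q-1)+2(q-p)$ — you say so yourself — and the missing idea is not a rank-by-rank refinement of Lemma~\ref{rt} but a short forward argument. A shortest reset word $w$ must end at the unique merging state $t=(q-p+k+1)\Mod q$, and Lemma~\ref{rt} gives $|w|\ge(p-1)(q-1)+k$ as you observed. Now suppose $|w|=(p-1)(q-1)+k+i$ with $0\le i\le k-1$, and track the two states $q-i$ (the state $0$ if $i=0$) and $q-p-i$. During the first $i$ letters neither trajectory meets a branch state, so they arrive at $0$ and at $k$ respectively; whichever the $(i+1)$st letter is, one of them lands on $q-p+1$ (since $0\dt a=q-p+1$ and $k\dt b=q-p+1$), and the next $k$ letters, arbitrary as they are, carry it to $t$. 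The remaining $(p-1)(q-1)-1$ letters of $w$ must then map $t$ back to $t$, i.e. trace a closed path whose length decomposes into cycle lengths $p$ and $q$ — impossible by Theorem~\ref{coin_problem}. Hence $|w|\ge(p-1)(q-1)+2k=(p-1)(q-1)+2(q-p)$. This pinpointed contradiction, applied once for each residue $i$, is what replaces the "accumulating re-alignment cost" you hoped to extract from backward analysis of rank-$2$ configurations.
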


\begin{proof}
First let us assume that $k<q-p$.
Let $w$ be reset word for the automaton $\mathrsfs{D}^{aa}(q,p,k)$ having minimal possible length. Lemma~\ref{B2_synstate} implies that
the word $w$ brings the automaton to the state $q-p+1$. Note, that any word of length $q-p-k$ brings the state $k+1$ to the state
$q-p+1$. Thus by lemma~\ref{rt} we have $|w|\ge (p-1)(q-1)+q-p-k$.

Let us prove that the word $w_1=a^{q-p-k}(ba^{k-1}ba^{q-k-1})^{p-2}ba^{k-1}ba^{q-p-k}$ is synchronizing.
Consider the cycle $C=\{0,q-p+1,q-p+2,\ldots,q-1\}$. Note, that the prefix $a^{q-p-k}$
maps the states, ranging from $k+1$ to $q-p$, to the states in $C$.
Consider now the action of the prefix $a^{q-p-k}$ on the states from $1$ to $k$.
If $q-p-k+1>k$, then all these states are mapped to some states in $C$.
If $q-p-k+1\le k$, then these states are mapped into $C\cup\{q-p-k+1,\ldots,k\}$.
Next, for each state $t$ from $q-p-k+1$ to $k$ we present a state $t'$ from $C$ such that
$t\dt ba^{k-1}=t'\dt ba^{k-1}$. If $t\ne k$, then it is easy to check that $t'=q-p+t$. Since
$q-p-k+1>1$, we have $t'>q-p+1$. Hence the state $t'\in C$.
If $t=k$, then $t'=k+p$ (recall, that $k+p<q$). The state $k+p$ belongs to $C$. Indeed,
from $q-p-k+1\le k$ and $k<p$ we obtain $k+p>2k\ge q-p+1$.
Hence the word $w_1$ brings the automaton $\mathrsfs{D}^{aa}(q,p,k)$ into the subset of $C\dt (ba^{k-1}ba^{q-k-1})^{p-2}ba^{k-1}ba^{q-p-k}$.
Thus it remains to show, that the latter set is a singleton.
The argument is similar to the proof of lemma~\ref{Cycle_syn}. Instead of the word $ba^{q-1}$
 we use the word $v=ba^{k-1}ba^{q-k-1}$. First we note, that the word $v$ fixes the state $0$.
The word $v$ moves all the other states in $C$ except $q-k$ along the cycle
in the same way as the word $ba^{q-1}$ does in lemma~\ref{Cycle_syn}.
The state $q-k$ leaves the cycle after applying the prefix $ba^{k-1}b$, but it can be easily seen that
$(q-k)\dt ba^{k-1}\boldsymbol{b}a^{q-k-1}=(q-k)\dt ba^{k-1}\boldsymbol{a}a^{q-k-1}.$
Thus we may treat the state $q-k$ as if it never left the cycle $C$. Following the argument
in lemma~\ref{Cycle_syn}, we conclude, that $C\dt v^{p-2}=\{0,q-r\}$, where $r$ is the remainder
of the division of $q$ by $p$. Finally, we observe that $0\dt ba^{k-1}ba^{q-p-k}=(q-r)\dt ba^{k-1}ba^{q-p-k}=q-p+1.$

Consider now the case $k=q-p.$ Let $w$ be a synchronizing word for the automaton $\mathrsfs{D}^{aa}(q,p,k)$ having minimal
possible length.
Since the incoming edges to the state $q-p+1$ have different labels, the word $w$ necessarily resets the automaton 
to the state $q-p+1+k.$
For convenience, let $t$ denote the state $q-p+1+k$.
Every word of length $k$ brings the state $q-p+1$ to the state $t$. Therefore, by lemma~\ref{rt} we have
$|w|\ge (p-1)(q-1)+k$. Suppose $|w|=(p-1)(q-1)+k+i$ for some $0\le i\le k-1.$ Consider the states $q-i$
(the state $0$, if $i=0$) and $q-p-i$. The prefix of $w$ of length $k+1+i$ will bring one of these states
to the state $t$ depending on the $(i+1)$st letter. The remaining $(p-1)(q-1)-1$ letters of $w$ will
move the state $t$ to itself. But this path is a combination of cycles of lengths
$p$ and $q$, which is impossible by theorem~\ref{coin_problem}. Consequently, $|w|\ge (p-1)(q-1)+2k=(p-1)(q-1)+2(q-p).$

Let us prove that the word $w_2=a^{q-p}(ba^{k-1}ba^{q-k-1})^{p-2}ba^{k-1}ba^{q-p}$ is synchronizing.
The prefix $a^{q-p}$ brings all the states lying outside the cycle $C=\{0,q-p+1,q-p+2,\ldots,q-1\}$ into $C$.
Arguing as in the previous case we conclude, that $C\dt (ba^{k-1}ba^{q-k-1})^{p-2}=\{0,q-r\}$.
It easy to see, that $0\dt ba^{k-1}ba^{q-p}=(q-r)\dt ba^{k-1}ba^{q-p}=t$.


%
\end{proof}

We can partially generalize this result as we did in theorem~\ref{wielandt_general} for the case of more than $q$ states.
We consider a primitive digraph $D_\lambda (q,p,k)$ presented on Fig.~\ref{fig_B1_gen}, where $1\le\lambda<p$.
For convenience, we set $D_0(q,p,k)=D(q,p,k)$. 
Its colorings are denoted by $\mathrsfs{D}_\lambda^{aa}(q,p,k)$ and $\mathrsfs{D}_\lambda^{ab}(q,p,k)$.
\begin{figure}[ht]
 \begin{center}
  \unitlength=4pt
    \begin{picture}(45,45)(0,0)
    \gasset{Nw=5,Nh=5,Nmr=2.5}
    \thinlines
    \node(A6)(22.5,0){$t$}
\node(A0)(1.5,30){$0$}
\node(A1)(11.25,42){$1$}
\node[Nframe=n](A2)(33.75,42){$\ldots$}
\node(A3)(43.5,30){$k$}
\node[Nw=6](A4)(5.29,8){$q$-$1$}
\node(A5)(39.7,8){$s$}
\node[Nframe=n,NLdist=1](A7)(44,17){$\vdots$}
\node(A8)(10.5,27){$q$}
\node[Nframe=n](A9)(22.5,22.5){$\ddots$}
\node[Nw=7](A10)(33,15){$q$+$\lambda$-$1$}
\node[Nw=6](A11)(31.5,27){$q$+$\lambda$}
\node[Nframe=n,NLdist=1](A12)(22.5,16.5){$\vdots$}
\node[Nw=8](A13)(22.5,7.5){$q$+$2\lambda$-1}
\node[Nframe=n](A14)(12,2.6){$\ldots$}
\node[Nframe=n](A15)(33,2.6){$\ldots$}
    \drawedge[curvedepth=1](A0,A1){}
    \drawedge[curvedepth=2](A1,A2){}
    \drawedge[curvedepth=1](A2,A3){}
    \drawedge[curvedepth=3](A4,A0){}
    \drawedge[curvedepth=1](A3,A7){}
    \drawedge[curvedepth=1](A7,A5){}
    \drawedge[curvedepth=1](A5,A15){}
    \drawedge[curvedepth=1](A6,A14){}
    \drawedge[curvedepth=1](A14,A4){}
    \drawedge[curvedepth=1](A15,A6){}        
    \drawedge(A0,A8){}
    \drawedge(A8,A9){}    
    \drawedge(A9,A10){}    
    \drawedge(A10,A5){}
    \drawedge(A3,A11){}
    \drawedge[curvedepth=-1](A11,A12){}
    \drawedge(A12,A13){}    
    \drawedge(A13,A6){}    
%

    \end{picture}
 \end{center}
  \caption{The digraph $D_\lambda(q,p,k)$}
  \label{fig_B1_gen}
\end{figure}
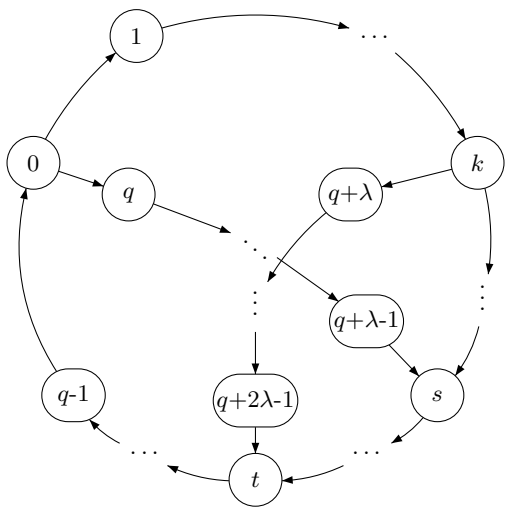
\begin{lemma}
\label{cong_B}
If $1\le\lambda< p$ and $z\in\{a,b\}$, then $\mathrsfs{D}_\lambda^{az}(q,p,k)/\sigma$ is equal to $\mathrsfs{D}_{\lambda-1}^{az}(q,p,k)$, and
$$ \rt(\mathrsfs{D}_\lambda^{az}(q,p,k)) = \rt(\mathrsfs{D}_{\lambda-1}^{az}(q,p,k)) + 1.$$
\end{lemma}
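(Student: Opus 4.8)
The plan is to follow the argument of Lemma~\ref{cong_A}, now run for both stretched branches of $D_\lambda(q,p,k)$ at once. I would fix a shortest synchronizing word $w$ for $\mathrsfs{D}_\lambda^{az}(q,p,k)$ and first locate the state $v$ to which it resets the automaton. Writing $w=w'x$, the last letter $x$ collapses $Q\dt w'$ onto $\{v\}$, so $v$ must be a common endpoint of two distinct edges carrying the same label; in particular $v$ has in-degree two in $D_\lambda(q,p,k)$. The only vertices of in-degree two are the two points $s$ and $t$ at which the stretched chords re-enter the long cycle. The decisive observation, valid precisely because $\lambda\ge 1$, is that the cyclic predecessors of $s$ and of $t$ are ordinary (non-branch) vertices, so that $s$ and $t$ each have exactly two in-neighbours and all four of these in-neighbours act identically under $a$ and $b$. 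Concretely, the in-neighbours of $s$ are the last vertex of the chain leaving $0$ together with the cyclic predecessor of $s$, and symmetrically for $t$.

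With $v\in\{s,t\}$ fixed, I would peel off the last letter. Since the full preimage of $v$ under either letter is exactly its pair of in-neighbours, we get that $Q\dt w'$ is contained in this pair; minimality of $w$ excludes $|Q\dt w'|=1$, so $Q\dt w'$ equals the pair. As its two vertices are deterministic and share the single successor $v$, they form one non-trivial $\sigma$-class. Hence $w'$ carries all of $Q$ into a single $\sigma$-class and therefore synchronizes the quotient, which yields $\rt(\mathrsfs{D}_\lambda^{az}(q,p,k)/\sigma)\le |w'|=\rt(\mathrsfs{D}_\lambda^{az}(q,p,k))-1$.

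It then remains to identify the quotient, and here I would argue directly from the transition rules that the non-trivial $\sigma$-classes are exactly the two in-neighbour pairs, one at $s$ and one at $t$: every other vertex has in-degree one, while the branch vertices $0$ and $k$ each form a singleton class because their two out-edges carry different labels. Collapsing the two pairs deletes one vertex from each chain, re-routing it so that the chain re-enters the cycle one step earlier, and leaves the branching at $0$ and $k$ untouched; comparing the resulting incidences with the definition shows $\mathrsfs{D}_\lambda^{az}(q,p,k)/\sigma=\mathrsfs{D}_{\lambda-1}^{az}(q,p,k)$. Lemma~\ref{congr} then supplies the reverse bound $\rt(\mathrsfs{D}_\lambda^{az}(q,p,k))\le\rt(\mathrsfs{D}_{\lambda-1}^{az}(q,p,k))+1$, and combining it with the previous inequality gives the stated equality.

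The step I expect to be the main obstacle is the last one, the precise identification of the quotient: one must verify that the two in-neighbour pairs are the \emph{only} non-trivial $\sigma$-classes, so that exactly one vertex disappears from each chain and nothing else is affected, and that the merged incidences reproduce $\mathrsfs{D}_{\lambda-1}^{az}(q,p,k)$ verbatim, uniformly for both colorings $z\in\{a,b\}$. A secondary point to handle carefully is the claim that the final merge can occur only at $s$ or $t$; this is the analogue of Lemma~\ref{B2_synstate} and follows from the same reasoning, but it must now be justified with the chords replaced by chains, which is exactly what makes both $s$ and $t$ clean same-label merge targets once $\lambda\ge 1$ and so removes the special role that $k=q-p$ played at the base level.
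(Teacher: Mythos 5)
Your proposal is, step for step, the paper's own proof: a shortest reset word must end at $s$ or at $t$; peeling off the last letter leaves the two\-/element preimage pair; that pair is a non\-/trivial $\sigma$-class, so the truncated word synchronizes the quotient; the quotient is identified with $\mathrsfs{D}_{\lambda-1}^{az}(q,p,k)$; and lemma~\ref{congr} supplies the reverse inequality. So the approach matches the paper exactly.

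However, the step you yourself call decisive --- that the cyclic predecessors of $s$ and of $t$ are ordinary, non\-/branch vertices --- fails for parameters the lemma covers, and this is a genuine gap (one the paper's proof shares: it asserts without justification that both preimage pairs are $\sigma$-classes). First note that the entry points must move with $\lambda$: for $D_\lambda(q,p,k)$ to stay primitive and to keep two cycles of length $p$ (otherwise the reset thresholds in theorem~\ref{B_general} could not be of the form $(p-1)(q-1)+O(q)$), the chain from $0$ re-enters the long cycle at $s=q-p+\lambda+1$ and the chain from $k$ at $t\equiv k+\lambda+1-p \pmod q$. At $s$ your claim is correct, and $\lambda\ge 1$ is indeed exactly what makes $s-1=q-p+\lambda$ different from $k$, so you are right that $k=q-p$ loses its special role. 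But at $t$ the claim breaks when $k+\lambda=p$: then $t=1$ and its cyclic predecessor is the branch vertex $0$. In that case (i) the pair $\{0,\,q+2\lambda-1\}$ is not a $\sigma$-class, since one letter sends $0$ to $q\ne 1$; (ii) the quotient collapses only the pair at $s$, so it has $q+2\lambda-1$ states and chains of unequal lengths --- it is not $\mathrsfs{D}_{\lambda-1}^{az}(q,p,k)$, so the lemma's first assertion is itself false there; and (iii) the merge at $t$ remains available, because the edges $0\to 1$ and $q+2\lambda-1\to 1$ share the label $b$, so you cannot rule out that a shortest word ends at $t$, and then $Q\dt w'$ does not lie in one $\sigma$-class and your bound $\rt(\mathrsfs{D}_\lambda^{az}/\sigma)\le \rt(\mathrsfs{D}_\lambda^{az})-1$ is unsupported. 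A concrete counterexample is $(q,p,k,\lambda)=(4,3,1,2)$: here $D_2(4,3,1)$ is the $4$-cycle $0,1,2,3$ together with the $3$-cycles $0\to 4\to 5\to 0$ and $1\to 6\to 7\to 1$; the unique non\-/trivial $\sigma$-class of either coloring is $\{3,5\}$, and the quotient has seven states while $\mathrsfs{D}_1^{az}(4,3,1)$ has six. So your argument (like the published one, whose garbled side condition ``$k\le k-p$'' in the proof of theorem~\ref{B_general} hints that some restriction was intended) is valid only under the additional hypothesis $k+\lambda\ne p$; the excluded case lands precisely in the ``non\-/equal chain lengths'' situation that the paper explicitly defers to future work.
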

\begin{proof}
Let $w$ be a word synchronizing the automaton $\mathrsfs{D}_\lambda^{az}(q,p,k)$ having minimal length. Then $w$ resets the automaton  either to the state $s$, or to the state $t$. 
Let $x$ be the last letter of $w$, so that $w=w'x$. The word $w'$ brings the automaton $\mathrsfs{D}_\lambda^{az}(q,p,k)$ either to the set $\{q+\lambda-1,s-1\}$, or $\{q+2\lambda-1,t-1\}$. 
These two pairs of states form the two non-trivial $\sigma$-classes. Hence the factor automaton $\mathrsfs{D}_\lambda^{az}(q,p,k)/\sigma$ is equal to $\mathrsfs{D}_{\lambda-1}^{az}(q,p,k)$, and it is synchronized by $w'$. Thus $\rt(\mathrsfs{D}_\lambda^{az}(q,p,k)/\sigma)\le \rt(\mathrsfs{D}_\lambda^{az}(q,p,k))-1$.
On the other hand, by lemma~\ref{congr} we have $\rt(\mathrsfs{D}_\lambda^{az}(q,p,k)/\sigma)\ge \rt(\mathrsfs{D}_\lambda^{az}(q,p,k))-1$, and we get the required equality. 
\end{proof}
\begin{theorem}
\label{B_general}
If $1\le\lambda< p$, then\\ 
$(i)$ $\rt(\mathrsfs{D}_\lambda^{ab}(q,p,k))=(p-1)(q-1)+q-p-k+\lambda;$\\
$(ii)$ $\rt(\mathrsfs{D}_\lambda^{aa}(q,p,k))=(p-1)(q-1)+q-p-k+\lambda,$ if $k<q-p$;\\ 
$(iii)$ $\rt(\mathrsfs{D}_\lambda^{aa}(q,p,k))=(p-1)(q-1)+2(q-p)+\lambda,$ if $k=q-p$.
\end{theorem}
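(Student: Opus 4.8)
The plan is to obtain all three formulas by iterating the factorization Lemma~\ref{cong_B} down to the base cases already settled in Theorems~\ref{B2_rt} and~\ref{B1_rt}, in exact analogy with the way Lemma~\ref{cong_A} was used to deduce Theorem~\ref{wielandt_general} from Theorem~\ref{rt_A1}. I would fix a color $z\in\{a,b\}$ and proceed by a straightforward descent on $\lambda$. The key observation is that the hypothesis $1\le\lambda<p$ guarantees that each of the intermediate values $\lambda,\lambda-1,\ldots,1$ lies in the admissible range $\{1,\ldots,p-1\}$ for which Lemma~\ref{cong_B} is stated; hence the lemma may legitimately be applied at every step of the descent, and at each step it strictly decreases the reset threshold by exactly one.

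Applying Lemma~\ref{cong_B} a total of $\lambda$ times and telescoping the resulting chain of equalities gives
$$\rt(\mathrsfs{D}_\lambda^{az}(q,p,k))=\rt(\mathrsfs{D}_0^{az}(q,p,k))+\lambda.$$
Recalling the convention $D_0(q,p,k)=D(q,p,k)$, the automaton $\mathrsfs{D}_0^{az}(q,p,k)$ is nothing but $\mathrsfs{D}^{az}(q,p,k)$, whose reset threshold is already known. For statements $(i)$ and $(ii)$ I would substitute the value $(p-1)(q-1)+q-p-k$ furnished by Theorem~\ref{B2_rt} (in the $ab$-coloring) and by the first case of Theorem~\ref{B1_rt} (in the $aa$-coloring, under the proviso $k<q-p$); adding $\lambda$ yields the claimed expression $(p-1)(q-1)+q-p-k+\lambda$. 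For statement $(iii)$ I would instead use the second case of Theorem~\ref{B1_rt}, namely $\rt(\mathrsfs{D}^{aa}(q,p,k))=(p-1)(q-1)+2(q-p)$ when $k=q-p$, and again add $\lambda$ to obtain $(p-1)(q-1)+2(q-p)+\lambda$.

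Since the substantive content of the argument has already been packaged into Lemma~\ref{cong_B} and the two base theorems, no genuine obstacle remains. The only point requiring care is the bookkeeping of the inductive descent: one must check that the range condition for Lemma~\ref{cong_B} is respected at every intermediate level—which, as noted, follows at once from $\lambda<p$—and that the two non-trivial $\sigma$-classes $\{q+\mu-1,s-1\}$ and $\{q+2\mu-1,t-1\}$ identified in that lemma are indeed the pairs collapsed as $\mu$ runs from $\lambda$ down to $1$. This structural fact is precisely what Lemma~\ref{cong_B} supplies, so the present theorem follows by pure induction without any further computation.
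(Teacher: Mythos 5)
Your proposal is correct and follows essentially the same route as the paper: iterate Lemma~\ref{cong_B} $\lambda$ times (each application dropping the reset threshold by exactly one), telescope to reach the base automaton $\mathrsfs{D}_0^{az}(q,p,k)=\mathrsfs{D}^{az}(q,p,k)$, and then plug in the values from Theorems~\ref{B2_rt} and~\ref{B1_rt}. Your write-up is in fact slightly cleaner than the paper's, which contains the typos ``$k\le k-p$'' and ``$\mathrsfs{D}_0^{aa}$'' where $\mathrsfs{D}_0^{ab}$ is meant in case $(i)$.
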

\begin{proof}
Since there are $\lambda$ states both on the path from the state $0$ to $s$, and from $k$ to $t$, and $k\le k-p$, lemma~\ref{cong_B} can be applied $\lambda$ times. Each time lemma~\ref{cong_B} is applied, the reset threshold is decreased strictly by one.
In the end, from the automaton $\mathrsfs{D}_\lambda^{ab}(q,p,k)$ we obtain the automaton $\mathrsfs{D}_0^{aa}(q,p,k)$, whose reset
threshold is known by theorem~\ref{B2_rt}. Therefore, we have 
$rt(\mathrsfs{D}_\lambda^{ab}(q,p,k))=(p-1)(q-1)+q-p-k+\lambda$. In an analogous way from the automaton $\mathrsfs{D}_\lambda^{aa}(q,p,k)$ we obtain the automaton $\mathrsfs{D}_0^{aa}(q,p,k)$.
Applying theorem~\ref{B1_rt}, we obtain $\rt(\mathrsfs{D}_\lambda^{aa}(q,p,k))=(p-1)(q-1)+q-p-k+\lambda$ in case $k<q-p$, and $\rt(\mathrsfs{D}_\lambda^{aa}(q,p,k))=(p-1)(q-1)+2(q-p)+\lambda$ if $k=q-p$.
\end{proof}



The case of non-equal number of
states on the paths from 0 to $s$ and from $k$ to $t$ is much more technical, and will be published elsewhere.


\end{document}